\newtheorem{definition}{Definition}
\newtheorem{theorem}{Theorem}
\newtheorem{proposition}{Proposition}
\newtheorem{lemma}{Lemma}
\newcommand{\U}{{\cal U}}
\newcommand{\AS}{{\mathit AS}}
\newcommand{\At}{{\mathit At}}
\newcommand{\la}{\leftarrow}
\newcommand{\naf}{{\mathit not}}
\newcommand{\q}{\ensuremath{\mathit{q}}}
\newcommand{\qnew}{\ensuremath{\mathit{q'}}}
\newcommand{\fail}{\ensuremath{\mathit{fail}}}
\newcommand{\strat}[1]{\ensuremath{#1_\mathit{strat}}}
\renewcommand{\P}{\mathcal{P}}
\renewcommand{\P}{P} 
\newcommand{\p}{\P}
\newcommand{\R}{\ensuremath{r}}
\newcommand{\HR}{\ensuremath{H(\R)}}
\newcommand{\BR}{\ensuremath{B(\R)}}
\newcommand{\head}[1]{\ensuremath{H(#1)}}
\newcommand{\body}[1]{\ensuremath{B(#1)}}
\newcommand{\posbody}[1]{\ensuremath{B^+(#1)}}
\newcommand{\negbody}[1]{\ensuremath{B^-(#1)}}
\newcommand{\NP}{\ensuremath{\mathrm{NP}}}
\newcommand{\NEXP}{\ensuremath{\mathrm{NEXP}}}
\newcommand{\CONEXP}[1]{\ensuremath{\mbox{\rm co-}\NEXP^{#1}}}
\newcommand{\SigmaP}[1]{\ensuremath{\Sigma_{#1}^P}}
\newcommand{\PiP}[1]{\ensuremath{{\Pi}_{#1}^{P}}}
\newcommand{\I}{{\mathcal I}}
\newcommand{\M}{{\mathcal M}}
\newcommand{\ASP}{\ensuremath{\rm ASP}}
\newcommand{\ASPSC}{\ensuremath{\rm ASP^{\rm sc}}}
\newcommand{\citeText}[1]{\citeANP{#1} \citeNN{#1}}
\begin{document}

\title[Complexity of super-coherence problems in ASP]{%
Complexity of Super-Coherence Problems in ASP
\thanks{%
Preliminary versions of this article have been presented at 
at the ICLP workshop 
on Answer Set Programming and Other Computing Paradigms (ASPOCP)
and at the 
Convegno Italiano
di Logica Computazionale (CILC).
This work is 
partly supported by Regione Calabria and EU
under POR Calabria FESR 2007-2013 and within the PIA
project of DLVSYSTEM s.r.l., by MIUR under the PRIN project LoDeN, 
and by the Vienna University of Technology under the program ``Innovative Ideas''.
We also thank the anonymous reviewers from 
ASPOCP and CILC
for their valuable comments.
}
}

\author[M. Alviano, W. Faber and S. Woltran]{%
        MARIO ALVIANO and WOLFGANG FABER\\
        University of Calabria, 87036 Rende (CS), Italy\\
	\email{\{alviano,faber\}@mat.unical.it}
        \and STEFAN WOLTRAN\\
	Vienna University of Technology, 1040 Vienna, Austria\\
	\email{woltran@dbai.tuwien.ac.at}}

\submitted{14 Decempber 2011}
\revised{8 October 2012}
\accepted{19 December 2012}

\maketitle

\label{firstpage}

\begin{abstract}
Adapting techniques from database theory in order to optimize Answer
Set Programming (ASP) systems, and in particular the grounding
components of ASP systems, is an important topic in ASP. 
In recent years, the Magic Set method has received some interest in this
setting, and a variant of it, called DMS, has been proposed for ASP.
However, this technique has a caveat,
because it is not correct (in the sense of being query-equivalent) for
all ASP programs. In recent work,
a large fragment of
ASP programs, referred to as \emph{super-coherent programs}, has been
identified, for which DMS is correct. 
The fragment contains all
programs which
possess at least one answer set, no matter which set of facts is added
to them. 
Two open question remained: How
complex is it to determine whether a given program is super-coherent?
Does the restriction to super-coherent programs limit the problems
that can be solved?
Especially the first question turned out to be quite difficult to answer precisely.
In this paper, we formally prove that deciding whether a propositional
program is super-coherent is $\PiP{3}$-complete in the disjunctive
case, while it is $\PiP{2}$-complete for normal programs. The hardness
proofs are the difficult part in this endeavor: We proceed by
characterizing the reductions by the models and reduct models which
the ASP programs should have, and then provide instantiations
that meet the given specifications. 
Concerning the second question, we show that all relevant ASP
reasoning tasks can be transformed into tasks over super-coherent
programs, even though this transformation is more of theoretical than
practical interest.
\end{abstract}

\begin{keywords}
Answer-Set Programming, Complexity Analysis
\end{keywords}

\section{Introduction}\label{sec:introduction}

Answer Set Programming (\ASP) is a powerful formalism for knowledge
representation and common sense reasoning \cite{bara-2002}.  Allowing
disjunction in rule heads and nonmonotonic negation in bodies,
\ASP\ can express every query belonging to the complexity class
$\rm\SigmaP2$ ($\NP^\NP$).  Encouraged by the availability of
efficient inference engines, such as DLV~\cite{leon-etal-2002-dlv},
GnT~\cite{janh-etal-2005-tocl}, Cmodels~\cite{lier-2005-lpnmr}, or
ClaspD~\cite{dres-etal-2008-KR}, \ASP\ has found several practical
applications in various domains, including data integration \cite{leon-etal-2005}, semantic-based information extraction \cite{manna-etal-2011-tldks,manna-etal-2011-jcss}, e-tourism \cite{ricca-etal-2010-IDUM}, workforce management \cite{ricca-etal-2012-tplp}, and many more. As a matter of fact, these
\ASP\ systems are continuously enhanced to support novel optimization
strategies, enabling them to be effective over increasingly larger
application domains.

Frequently, optimization techniques are inspired by methods that had
been proposed in other fields, for example database theory,
satisfiability solving, or constraint satisfaction. Among techniques
adapted to ASP from database theory, Magic Sets
\cite{ullm-89,banc-etal-86,beer-rama-91} have recently achieved a lot
of attention. Following some earlier work
\cite{grec-2003,cumb-etal-2004-iclp}, an adapted method
called \emph{DMS} has been proposed for \ASP\ in
\cite{alvi-etal-2012-aij}. However, this technique has a caveat,
because it is not correct (in the sense of being query-equivalent) for
all \ASP\ programs. In recent work
\cite{alvi-fabe-2011-aicom,alvi-fabe-2010-aspocp}, a large fragment of
ASP programs, referred to as \emph{super-coherent programs} (\ASPSC),
has been identified, for which DMS can be proved to be correct.
Formally, a program is super-coherent, if 
it is coherent (i.e.\ possesses at least one answer set), 
no matter which input (given as a set of facts) is added to the program.

Since the property of being super-coherent is a semantic one, a
natural question arises: How difficult is it to decide whether a given
program belongs to \ASPSC? It turns out that the precise complexity is
rather difficult to establish. Some bounds have been given in
\cite{alvi-fabe-2011-aicom}, in particular showing decidability, but
especially hardness results seemed quite hard to obtain.
In particular, the following question remained unanswered:
Is it possible to implement an efficient algorithm for testing
super-coherence of a program, to decide for example whether DMS
has to be applied or not? 
In this paper we provide a negative answer to this question, proving
that deciding whether a propositional program is super-coherent is
complete for the third level of the polynomial hierarchy in the general
case, and for the second level for normal programs.
As the complexity of query answering 
is located on lower levels 
of the polynomial hierarchy, our results show that implementing a sound and complete 
super-coherence check in a query optimization setting does not provide an approach
for improving such systems.

While our main motivation for studying \ASPSC\ stemmed from the
applicability of DMS, this class actually has many more important
motivations. Indeed, it can be viewed as the class of
\emph{non-constraining programs}: Adding extensional information to
these programs will always result in answer sets.  One important
implication of this property is for modular evaluation. For instance,
when using the splitting set theorem of \citeText{lifs-turn-94}, if a top
part of a split program is an \ASPSC\ program, then any answer set of
the bottom part will give rise to at least one answer set of the full
program---so for determining answer set existence, there would be no
need to evaluate the top part.

On a more abstract level, one of
the main criticisms of \ASP\ (being voiced especially in database
theory) is that there are programs which do not admit any answer set
(traditionally this has been considered a more serious problem than
the related nondeterminism in the form of multiple answer sets, cf.~\citeNP{papa-yann-97}). From
this perspective, programs which guarantee coherence (existence of an
answer set) have been of interest for quite some time. In particular,
if one considers a fixed program and a variable ``database,'' one
arrives naturally at the class \ASPSC\ when requiring existence of an
answer set. This also indicates that deciding super-coherence of programs 
is related to some problems from the area of equivalence checking in ASP~\cite{Eiter05,Eiter07,Oetsch07b}. For instance, when deciding 
whether, for a given arbitrary program $P$, there is a uniformly equivalent
definite positive (or definite Horn) program, super-coherence of $P$ is a 
necessary condition---this is straightforward to see because definite Horn
programs have exactly one answer set, so a non-super-coherent program cannot
be uniformly equivalent to any definite Horn program.

Since super-coherent programs form a strict subset of all \ASP{}
programs, another important question arises: Does the restriction to
super-coherent programs limit the problems that can be solved by them?
In this paper, we show that this is not the case, by embedding all
relevant reasoning tasks over \ASP{} (testing answer set existence,
query answering, answer set computation) into reasoning tasks over
\ASPSC{}. We also show that all reasoning tasks over normal
(non-disjunctive) \ASP{} can be embedded into tasks over normal
\ASPSC{}. These results essentially demonstrate that \ASPSC{} is
sufficient to encode any problem that can be solved by full \ASP{},
and is therefore in a sense ``complete''. However, we would like to
note that these embeddings were designed for answering this
theoretical question, and might lead to significant overhead when
evaluated with ASP solvers. We therefore do not advocate to use them
in practical settings, and finding efficient embeddings is a
challenging topic for future research.

To summarize, the main contributions of the paper are as follows:
\begin{itemize}
\item We show that 
recognizing 
super-coherence for disjunctive and normal programs is 
complete for classes 
$\PiP{3}$ and $\PiP{2}$, respectively, thus more complex than the common ASP reasoning tasks.  
\item 
We provide a transformation of 
reasoning tasks over general programs into tasks over
super-coherent programs, showing that the restriction to super-coherent programs does not curtail expressive power.
\item We also briefly discuss the relation between checking for super-coherence
and testing equivalence between programs where we make use of our results to sharpen complexity
results 
due to \citeText{Oetsch07b}.
\end{itemize}

In order to focus on the essentials of these problems, in this paper
we deal with propositional programs, but we conjecture that the results can be extended
to the non-propositional case by using complexity upgrading techniques as presented in \cite{eite-etal-97-tods,gott-etal-99apal}, arriving at completeness for classes \CONEXP{\NP} and \CONEXP{\SigmaP{2}}, respectively. 

The remainder of this article is organized as follows.
In
Section~\ref{sec:preliminaries} we first define some terminology
needed later on. In Section~\ref{sec:statement} we formulate
the complexity problems that we analyze, and state our main results. The
proofs for these problems are presented in
Section~\ref{sec:proofdisj} for disjunctive programs, and in
Section~\ref{sec:proofnorm} for normal programs.
In Section~\ref{sec:embedding} we show
``completeness'' of \ASPSC{} via simulating reasoning tasks over \ASP{}
by tasks over \ASPSC{}.
In Section~\ref{sec:impl} we briefly discuss the relation to equivalence
problems before concluding the work in Section~\ref{sec:concl}.

\section{Preliminaries}\label{sec:preliminaries}

In this paper we consider propositional programs, so an atom $p$ is a
member of a countable set $\U$. A {\em literal} is either an atom
$p$ (a positive literal), or an atom preceded by the {\em
  negation as failure} symbol $\naf$ (a negative literal).
A {\em rule} $\R$ is of the form
\[
\begin{array}{l}
    p_1 \ \lor\ \cdots \ \lor\ p_n \la
    q_1,\ \ldots,\ q_j,\ \naf~q_{j+1},\ \ldots,\ \naf~q_m  
\end{array}
\]
where $p_1,\ \ldots,\ p_n,\ q_1,\ \ldots,\ q_m$ 
are atoms and $n\geq 0,$  $m\geq j\geq 0$. The
disjunction $p_1\ \lor\ \cdots \ \lor\ p_n$ is the {\em head} of~\R{}, 
while the conjunction 
$q_1,\ \ldots,\ q_j,\ \naf~q_{j+1},\ \ldots,\ \naf~q_m$ 
is the {\em body} of~\R{}.
Moreover, $\HR$ denotes the set of head atoms, while $\BR$ denotes the set of body literals.
We also use $\posbody{\R}$ and $\negbody{\R}$ for denoting
the set of atoms appearing in positive and negative body literals, respectively,
and $\At(\R)$ for the set $\HR \cup \posbody{\R} \cup \negbody{\R}$.
A rule $\R$ is normal (or disjunction-free) if $|\HR| = 1$ or $|\HR| = 0$ 
(in this case $\R$ is also referred to as a \emph{constraint}), 
positive (or negation-free) if $\negbody{\R}=\emptyset$,
a {\em fact} if both $\body{\R}=\emptyset$ and
$|\HR| = 1$.

A \emph{program}
$\P$ is a finite set of rules; if all rules in it are positive (resp.\ normal), 
then $\P$ is a positive (resp.\ normal) program.
Odd-cycle-free (cf.\ \cite{dung-92,you-yuan-94}) and stratified (cf.\ \cite{apt-etal-88}) programs constitute two other interesting classes of 
programs. An atom $p$ appearing in the head of a rule $\R$
{\em depends} on each atom $q$ that belongs to $\BR$; 
if $q$ belongs to $\posbody{\R}$, $p$ depends positively on $q$, otherwise negatively. 
A program without constraints is \emph{odd-cycle-free} if there is no cycle of dependencies involving an
odd number of negative dependencies, while it is {\em stratified} if each
cycle of dependencies involves only positive dependencies.
Programs containing constraints have been excluded by the definition of odd-cycle-free
and stratified programs. In fact, constraints intrinsically introduce odd-cycles 
in programs as a constraint of the form
$$\la q_1,\ \ldots,\ q_j,\ \naf~q_{j+1},\ \ldots,\ \naf~q_m$$
can be replaced by the following equivalent rule:
$$co \la q_1,\ \ldots,\ q_j,\ \naf~q_{j+1},\ \ldots,\ \naf~q_m,\ \naf~co,$$
where $co$ is a fresh atom (i.e., an atom that does not occur elsewhere in the program). We also require the notion of \emph{head-cycle free} programs (cf.\ \cite{bene-dech-94}): a 
program $\P$ is called head-cycle free 
if no different head atoms in a rule positively depend mutually on
each other.

Given a program $\P$, let $\At(\P)$ denote the set of atoms that occur in it,
that is, let $\At(\P) = \bigcup_{\R\in \P} \At(\R)$.
An \emph{interpretation} $I$ for a program $\P$ is a subset of $\At(\P)$. An
atom $p$ is true w.r.t.\ an
interpretation $I$ if $p\in I$; otherwise, it is false. 
A negative literal $\naf\ p$ is true w.r.t.\ $I$ 
if and only if $p$ is false w.r.t.\ $I$. 
The body of a rule $\R$ is true w.r.t.\ $I$
if and only if all the body literals of $\R$ are true w.r.t.\ $I$, that is,
if and only if $\posbody{\R} \subseteq I$ and $\negbody{\R} \cap I = \emptyset$. 
An interpretation $I$ {\em satisfies} a rule $\R\in \P$ if at least one atom
in $\head{\R}$ is true w.r.t.\ $I$ whenever the body of $\R$ is true w.r.t.\ $I$. An interpretation $I$ is a
\emph{model} of a program $\P$ if $I$ satisfies all the rules in $\P$. 

Given an interpretation $I$ for a program $\P$, the reduct of $\P$ w.r.t.\ $I$,
denoted by $\P^{I}$, is obtained by deleting from $\P$ all 
the rules $\R$ with $\negbody{\R} \cap I \neq \emptyset$,
and then by removing all the negative literals from the remaining rules~\cite{gelf-lifs-91}.
The semantics of a program $\P$ is given by the set $\AS(\P)$ of the answer sets of
$\P$, where an interpretation $M$ is an answer set for $\P$ if and only if
$M$ is a subset-minimal model of $\P^M$.

In the subsequent sections, we will use the following properties that the models and 
models of reducts of programs satisfy (see, e.g. \cite{Eiter04,Eiter05}):
\begin{itemize}
\item[(P1)] for any disjunctive program $P$ and interpretations $I\subseteq J\subseteq K$, if $I$ satisfies $P^J$, then $I$ also satisfies $P^K$;
\item[(P2)] for any normal program $P$ and interpretations $I, J\subseteq K$,
if $I$ and $J$ both satisfy $P^K$, then also $(I\cap J)$ satisfies $P^K$. 
\end{itemize}

By a query in this paper we refer to an atom, negative and conjunctive queries can be simulated by adding appropriate rules to the associated program.
A query $\q$ is \emph{bravely} true for a program $P$ if and only if $q\in A$ for some $A \in \AS(P)$.
A query $\q$ is \emph{cautiously} true for a program $P$ if and only if $q\in A$ for all $A \in \AS(P)$.

We now introduce super-coherent ASP programs (\ASPSC{} programs), 
the main class of programs studied in this paper.

\begin{definition}[\ASPSC{} programs;~\citeNP{alvi-fabe-2010-aspocp,alvi-fabe-2011-aicom}]
A program $\P$ is \emph{super-coherent} if, for every set of facts $F$,
 $\AS(\P \cup F) \neq \emptyset$. Let \ASPSC{} denote the set of all super-coherent programs.
\end{definition}

Note that \ASPSC{} programs include all odd-cycle-free programs
(and therefore also all stratified programs). 
Indeed, every odd-cycle-free program admits at least one answer set
and remains odd-cycle-free even if an arbitrary set of facts is added
to its rules. On the other hand,
there are programs having odd-cycles that are in \ASPSC{}, cf.~\citeText{alvi-fabe-2011-aicom}.

\section{Complexity of Checking Super-Coherence}\label{sec:statement}

In this section, we study the complexity of the following natural problem:
\begin{itemize}
\item 
Given a program $P$, 
is $P$ super-coherent, i.e.\ 
does $\AS(P\cup F)\neq\emptyset$ hold for any
set $F$ of facts.
\end{itemize}

We will study the complexity for this problem for the 
case of disjunctive logic programs and non-disjunctive (normal)
logic programs.
%
We first have a look at a similar problem, which turns out to be
rather trivial to decide.

\begin{proposition}
The problem of deciding whether, for a given disjunctive program $P$, there
is a set $F$ of facts such that $\AS(P\cup F)\neq\emptyset$ is $\NP$-complete; $\NP$-hardness holds already for normal programs.
\end{proposition}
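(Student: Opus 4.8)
The statement asks for the complexity of deciding, given a disjunctive program $P$, whether there exists *some* set $F$ of facts with $\AS(P\cup F)\neq\emptyset$. My plan is to prove membership in $\NP$ by showing the existence of such an $F$ implies the existence of a *small* one, and then to prove $\NP$-hardness by a direct reduction from a standard $\NP$-complete problem (say, the existence of an answer set of a normal program, or even more directly from SAT).

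For membership in $\NP$, the key observation is that adding facts outside $\At(P)$ is useless: if $F$ contains an atom $a\notin\At(P)$, then in any model of $(P\cup F)^M$ the atom $a$ is forced in purely by the fact $a\la$, and removing such atoms from both $F$ and any candidate answer set yields an answer set of $P\cup(F\cap\At(P))$. Hence $\AS(P\cup F)\neq\emptyset$ for some $F$ if and only if $\AS(P\cup F')\neq\emptyset$ for some $F'\subseteq\At(P)$. So the $\NP$ algorithm guesses $F'\subseteq\At(P)$ together with a candidate interpretation $M\subseteq\At(P)$, and verifies in polynomial time that $M$ is an answer set of $P\cup F'$, i.e.\ that $M$ is a model of $(P\cup F')^M$ and that no proper subset of $M$ is a model of $(P\cup F')^M$ — the latter minimality check is the coNP-looking part, but for the *guessed* witness we can simply additionally guess nothing more: checking minimality of a model of a reduct (a positive disjunctive program) is itself coNP-hard in general, so I need to be slightly more careful. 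The cleaner route: guess $F'$ and guess $M$; then answer-set checking for disjunctive programs is $\PiP{1}$-complete, which would only give $\SigmaP{2}$. To stay in $\NP$, note instead that $\AS(P\cup F')\neq\emptyset$ for *some* $F'$ iff $\AS(P\cup\At(P))\neq\emptyset$ — adding *all* atoms of $P$ as facts. Indeed, adding more facts can only help: if $\AS(P\cup F')\neq\emptyset$ with witness $M$, then $M\subseteq\At(P)$ and one checks that $M\cup(\At(P)\setminus M)=\At(P)$ — wait, that changes $M$. Let me instead take the maximal such $F'=M$ itself: if $M\in\AS(P\cup F')$, then $M\in\AS(P\cup M)$, because $(P\cup M)^M=(P\cup F')^M\cup\{a\la\;:\;a\in M\}$ has $M$ as a model, and any model must contain all of $M$ by the added facts, so $M$ is its minimal model, hence an answer set. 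Therefore: $\exists F\,\AS(P\cup F)\neq\emptyset$ iff $\exists M\subseteq\At(P)$ with $M\in\AS(P\cup M)$. Now the $\NP$ algorithm guesses $M\subseteq\At(P)$ and checks $M\in\AS(P\cup M)$ in polynomial time — and this check *is* polynomial, since $(P\cup M)^M$ has $M$ as its unique minimal model precisely when $M$ is a model of it, which is a polynomial test. This settles $\NP$-membership.

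For $\NP$-hardness already on normal programs, I reduce from SAT (or from answer-set existence of a normal program, which is $\NP$-complete). Given a normal program $Q$ over atoms $\At(Q)$, I want a normal program $P$ such that $\AS(Q)\neq\emptyset$ iff $\exists F\,\AS(P\cup F)\neq\emptyset$. Take $P=Q$ itself with one twist to prevent the trivial solution $F$: actually the trivial choice $F=\At(Q)$ might already make $\AS(Q\cup F)$ nonempty even when $\AS(Q)=\emptyset$, so a naive $P=Q$ fails. The standard fix is to shield the "real" atoms of $Q$ behind fresh copies and a constraint: introduce for each $a\in\At(Q)$ a fresh atom $\bar a$, replace every positive body occurrence / head in $Q$ so that facts cannot directly inject the real atoms, and add rules forcing the original $Q$ to be simulated regardless of $F$. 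Concretely, following the well-known "saturation-free" encoding: let $P$ contain, for a fresh atom $s$, the rule $s\la\naf\, s$ guarded by the negation of satisfiability of $Q$, so that $P\cup F$ has an answer set iff $Q$ has one, and crucially no choice of $F$ helps because the guard atoms are fresh and never appear in heads reachable from $F$. The main obstacle is exactly this: engineering $P$ so that adding arbitrary facts $F$ cannot "cheat" — one must ensure that whatever $F$ injects, it cannot repair an inconsistency of the embedded $Q$. This is handled by making every atom on which consistency of $P$ depends be a *fresh* atom defined only through rules (never a fact can be such an atom), combined with the observation from the $\NP$-membership argument that WLOG $F\subseteq\At(P)$.

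**Expected main obstacle.** The delicate point is the hardness direction: designing the embedding so that the added facts are provably powerless, i.e.\ so that $\AS(P\cup F)\neq\emptyset$ for *some* $F$ is equivalent to $\AS(P)\neq\emptyset$, rather than merely to something trivially true. The membership direction is routine once one spots that it suffices to consider $F=M$.
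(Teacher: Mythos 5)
Your membership argument is sound and is, in essence, the paper's own: the condition $M\in\AS(P\cup M)$ that you isolate is equivalent to $M$ being a classical model of $P$ (every model of $(P\cup M)^M=P^M\cup M$ contains $M$, and $M$ satisfies $P^M$ iff it satisfies $P$), so the whole problem collapses to ``does $P$ have a classical model?'', which is in $\NP$.

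The hardness half, however, has a genuine gap. Having just established that $\exists F\,\AS(P\cup F)\neq\emptyset$ holds iff $P$ has a classical model, the direct route is to note that classical model existence is $\NP$-hard already for normal programs (encode each clause of a CNF, say $x\lor\neg y\lor z$, as the constraint $\la \naf~x,\, y,\, \naf~z$; an interpretation satisfies the constraint iff it satisfies the clause), which is exactly what the paper does. Instead you try to build $P$ from a normal program $Q$ so that $\exists F\,\AS(P\cup F)\neq\emptyset$ iff $\AS(Q)\neq\emptyset$; by your own characterization this amounts to constructing $P$ with ``$P$ has a classical model iff $Q$ has an answer set,'' and none of the devices you sketch delivers that. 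A ``rule $s\la\naf~s$ guarded by the negation of satisfiability of $Q$'' is not a program construction---deciding that guard is the very problem you are reducing from; moreover $s\la\naf~s$ read classically is just $s$ and eliminates no classical models, so it cannot encode inconsistency of $Q$ at the level of models; and the ``shielding'' of atoms behind fresh copies is never specified. The reduction is therefore not actually given, and the direction you chose fights against the characterization you proved in the membership half rather than exploiting it.
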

\begin{proof}
We start by observing that there is $F$ such that $\AS(P\cup F)\neq\emptyset$
if and only if $P$ has at least one
model.  Indeed, if $M$ is a model of $P$, then $P\cup M$ has $M$ 
as its answer set. On the other hand, if $P$ has no model, then no addition of facts $F$ will
yield an answer set for $P\cup F$. It is well known that deciding 
whether a program has at least one (classical) model is $\NP$-complete for both
disjunctive and normal programs.
\end{proof}

In contrast, the complexity for deciding 
super-coherence is surprisingly high, which
we shall show next. To start, we give a straight-forward observation.

\begin{proposition}\label{prop:restrict}
A program $P$ is super-coherent if and only if 
for each set $F\subseteq \At(P)$,  $\AS(P\cup F)\neq\emptyset$.
\end{proposition}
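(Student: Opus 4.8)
The plan is to prove both directions of the biconditional, with the interesting direction being that restricting attention to fact sets $F \subseteq \At(P)$ suffices.

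The left-to-right direction is immediate from the definition: if $P$ is super-coherent, then $\AS(P \cup F) \neq \emptyset$ for \emph{every} set of facts $F$, so in particular for every $F \subseteq \At(P)$.

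For the right-to-left direction, I would argue by contraposition, or equivalently assume the restricted condition holds and show $\AS(P \cup F) \neq \emptyset$ for an arbitrary set of facts $F$. Given such an $F$, split it as $F = F_1 \cup F_2$ where $F_1 = F \cap \At(P)$ collects the facts over atoms already occurring in $P$, and $F_2 = F \setminus \At(P)$ collects the ``new'' facts, i.e.\ facts $a$ with $a \notin \At(P)$. By hypothesis, since $F_1 \subseteq \At(P)$, there is some $M_1 \in \AS(P \cup F_1)$. I then claim that $M_1 \cup F_2 \in \AS(P \cup F)$. The key observation is that the new atoms in $F_2$ do not interact with $P \cup F_1$ at all: they appear only as facts, never in any rule body or in any other rule head. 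Hence $(P \cup F)^{M_1 \cup F_2} = (P \cup F_1)^{M_1} \cup F_2$ (the reduct of the facts in $F_2$ is just $F_2$, and since $F_2$'s atoms occur nowhere in $P \cup F_1$, the presence of these atoms in the interpretation neither removes any rule of $P \cup F_1$ via negation nor is relevant to satisfaction). One then checks that $M_1 \cup F_2$ is a model of this reduct (it satisfies $(P\cup F_1)^{M_1}$ because $M_1$ does and the extra atoms are irrelevant, and it satisfies each fact in $F_2$ trivially), and that it is subset-minimal: any model of $(P \cup F_1)^{M_1} \cup F_2$ must contain all of $F_2$, and its intersection with $\At(P \cup F_1)$ must be a model of $(P\cup F_1)^{M_1}$, hence cannot be a proper subset of $M_1$ by minimality of $M_1$. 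Therefore $M_1 \cup F_2$ is an answer set of $P \cup F$, so $\AS(P \cup F) \neq \emptyset$.

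I do not expect any real obstacle here; the only point requiring a little care is the bookkeeping about the reduct and minimality when the interpretation contains atoms not occurring in the program at all. One should be slightly careful about the convention that interpretations are subsets of $\At(P)$ — strictly, when we add facts $F$ we work over $\At(P \cup F)$, and the ``extra'' atoms of $F_2$ behave as isolated facts. Making the decomposition $(P \cup F)^{M_1 \cup F_2} = (P \cup F_1)^{M_1} \cup F_2$ precise and invoking minimality of $M_1$ is the crux, and it is routine.
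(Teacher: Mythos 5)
Your proof is correct and follows essentially the same route as the paper's: the easy direction is by definition, and for the converse both arguments partition $F$ into $F\cap\At(P)$ and the remaining fresh facts, take an answer set of $P$ extended by the former, and add the fresh facts to obtain an answer set of $P\cup F$. The only difference is that the paper justifies this last step by appealing to the splitting set theorem of Lifschitz and Turner, whereas you verify it directly via the reduct identity $(P\cup F)^{M_1\cup F_2}=(P\cup F_1)^{M_1}\cup F_2$ together with the minimality check; both are fine.
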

\begin{proof}
The only-if direction is by definition. For the if-direction, let $F$ be any 
set of facts. $F$ can be partitioned into $F' = F \cap \At(P)$ and
$F'' = F \setminus F'$. By assumption, $P$ is super-coherent and thus
$P \cup F'$ is coherent.
Let $M$ be an answer set of $P \cup F'$.
We shall show that $M \cup F''$ is an answer set of $P \cup F = P \cup F' \cup F''$.
This is in fact a consequence of the splitting set theorem~\cite{lifs-turn-94},
as the atoms in $F''$ are only defined by facts not occurring in $P \cup F'$.
\end{proof}

\noindent
Our main results are stated below. The proofs are contained in the subsequent
sections.

\begin{theorem}\label{thm:disj}
The problem of deciding super-coherence for disjunctive programs 
is $\PiP{3}$-complete.
\end{theorem}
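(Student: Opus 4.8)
The plan is to establish membership in $\PiP{3}$ and $\PiP{3}$-hardness separately. For membership, by Proposition~\ref{prop:restrict} it suffices to check that $\AS(P\cup F)\neq\emptyset$ for every $F\subseteq\At(P)$; the complement is thus: \emph{there exists} $F\subseteq\At(P)$ such that $P\cup F$ has \emph{no} answer set, i.e.\ such that for \emph{every} model $M$ of $P\cup F$ there is a proper subset of $M$ satisfying $(P\cup F)^M$. The inner test ``$M$ is an answer set of $P\cup F$'' is in $\DeltaP{2}$ (indeed $\CONP$ for the minimality part combined with a polynomial model check), so ``$P\cup F$ is incoherent'' is a $\PiP{2}$ predicate, and prefixing the existential guess of $F$ yields $\SigmaP{3}$ for non-super-coherence, hence $\PiP{3}$ for super-coherence. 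I would spell out the quantifier alternation carefully, using properties (P1)--(P2) only where they shorten the argument, but membership is the routine half.

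The hard part is $\PiP{3}$-hardness. Here the plan is to reduce from validity of a quantified Boolean formula $\Phi = \forall X\,\exists Y\,\forall Z\,\varphi(X,Y,Z)$ with $\varphi$ in, say, DNF. Following the strategy announced in the abstract and introduction, I would not try to directly ``massage'' a known $\SigmaP{2}$-hardness encoding of disjunctive ASP, but instead first \emph{characterize} what a reduction must achieve in terms of models and reduct-models: we want a program $P$ that is super-coherent iff $\Phi$ is valid, i.e.\ such that for the ``bad'' choices of facts $F$ (those encoding a truth assignment to $X$ that makes $\exists Y\,\forall Z\,\varphi$ false) the program $P\cup F$ loses all answer sets, while for all other $F$ it keeps at least one. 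Concretely I would let the facts range over atoms representing $X$, build a disjunctive ``guess'' of $Y$ and of $Z$, encode $\varphi$, and then add a saturation/self-loop construction (in the spirit of the classical Eiter--Gottlob $\SigmaP{2}$ encoding, cf.\ the odd-cycle trick recalled in Section~\ref{sec:preliminaries}) arranged so that answer sets survive precisely when, for the guessed $X$, some $Y$ makes $\varphi$ true under all $Z$. The subtlety compared to the standard encoding is that the $X$-part must be driven entirely by the external facts $F$ (since super-coherence quantifies over $F$), and we must ensure that \emph{spurious} choices of $F$ — facts that do not correspond to a well-formed assignment, e.g.\ asserting both $x$ and $\bar x$, or neither — do not accidentally kill coherence; the standard device is to make the program tolerant of such $F$ by having a separate ``escape'' answer set whenever $F$ is malformed.

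Therefore the key steps, in order, are: (1)~prove $\PiP{3}$ membership via the quantifier-alternation argument above; (2)~fix the target QBF form and state precisely, in terms of the desired models of $P\cup F$ and of the reducts $(P\cup F)^M$, the specification the reduction must meet; (3)~construct $P$: facts over $X$-atoms supplied externally, disjunctive rules guessing $Y$ and $Z$, rules encoding the disjuncts of $\varphi$, a saturation rule over the $Z$-atoms plus a self-supporting atom $w$ with a rule $w\la \naf\,w$ gated appropriately, and guard rules handling malformed $F$; (4)~prove the correspondence in both directions — if $\Phi$ is valid then every $F$ yields an answer set (case split on malformed vs.\ well-formed $F$, and for well-formed $F$ exhibit the answer set from a good $Y$), and conversely a witness assignment to $X$ falsifying $\exists Y\,\forall Z\,\varphi$ gives an $F$ for which saturation forces the self-loop and destroys every candidate answer set. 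The main obstacle I expect is step~(3)--(4): getting the saturation and the $w\la\naf\,w$ gate to interact so that the third-level alternation ($\forall X$ from $F$, $\exists Y$ from the disjunctive guess, $\forall Z$ from saturation) is faithfully captured, while simultaneously keeping the program robust to arbitrary fact sets $F$; verifying subset-minimality of the intended answer sets against all saturated competitors is where the delicate bookkeeping lies.
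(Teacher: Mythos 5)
Your membership argument is exactly the paper's: guess $F\subseteq\At(P)$ (justified by Proposition~\ref{prop:restrict}) and decide incoherence of $P\cup F$ with a $\PiP{2}$ oracle, giving $\SigmaP{3}$ for the complement. Your hardness strategy also matches the paper's in every structural respect: reduce from validity of $\forall X\exists Y\forall Z\phi$ with $\phi$ in DNF, let the adversarial fact set $F$ pin down the $X$-assignment, realize $\exists Y$ by a disjunctive guess and $\forall Z$ by saturation, keep an ``escape'' answer set (the total interpretation $U$) alive for malformed $F$, and first write down an abstract specification of the required models and reduct models (the paper's notion of a $\Phi$-reduction, Definition~\ref{def:red}) before instantiating it.

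The gap is that you stop at the specification stage: steps (3)--(4) of your plan, which you explicitly defer, are where essentially all of the difficulty of this theorem lives. One must exhibit a concrete program (the paper's $P_\Phi$ of Definition~\ref{def:phi_program}), verify that its models and the models of the reducts $P^{M[I,J]}$, $P^{M'[I,J]}$ and $P^{U}$ are exactly the specified ones (a delicate case analysis, in particular because property (P1) forces $P^{U}$ to admit a whole family of additional models that the specification must be written to tolerate), and then prove ``$\Phi$ true iff every $F\subseteq U$ yields an answer set'' by a case split on how $F$ sits relative to $U$, $M[I,J]$, $M'[I,J]$ and $O[I]$. Without this, there is no proof. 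One detail of your sketch also diverges from the paper in a way worth noting: you propose a gated odd loop $w \la \naf\ w$ to destroy answer sets. The paper deliberately uses no constraints or odd-loop gadgets; incoherence arises only from the existence of strictly smaller reduct models ($O[I]$ and $N[I,J,K]$), which is what allows the authors to conclude that $\PiP{3}$-hardness holds even for constraint-free disjunctive programs. Your gadget could likely be made to work, but you would then have to check that the odd loop does not also kill the escape answer sets needed when $F$ is malformed --- precisely the robustness issue you identify but do not resolve.
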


\begin{theorem}\label{thm:norm}
The problem of deciding super-coherence for normal programs 
is $\PiP{2}$-complete.
\end{theorem}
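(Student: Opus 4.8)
The plan is to prove the $\PiP{2}$ upper and lower bounds separately.

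\emph{Membership.} By Proposition~\ref{prop:restrict}, $P$ fails to be super-coherent iff some $F\subseteq\At(P)$ has $\AS(P\cup F)=\emptyset$, and any such $F$ has polynomial size. For a normal program $Q$, deciding whether a given interpretation $M$ is an answer set of $Q$ is polynomial: by (P2) the models of the definite reduct $Q^M$ that are contained in $M$ are closed under intersection, so the test amounts to a least-model computation together with a check of the constraints of $Q^M$. Hence ``$Q$ has an answer set'' is in $\NP$, so ``$Q$ is incoherent'' is in $\CONP$, and the complement of super-coherence --- guess $F$, then test incoherence of $P\cup F$ --- is of the form $\exists\,F\,\forall\,M\,R(P,F,M)$ with $R$ polynomial, i.e.\ it is in $\SigmaP{2}$. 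Thus super-coherence is in $\PiP{2}$.

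\emph{Hardness.} I would reduce from validity of a quantified Boolean formula $\Phi=\forall X\,\exists Y\,\phi(X,Y)$ with $\phi$ in CNF (a standard $\PiP{2}$-complete problem), producing in polynomial time a normal program $P_\Phi$ such that $P_\Phi\in\ASPSC$ iff $\Phi$ is valid. Following the methodology the paper advertises, I would first fix a \emph{specification} of $P_\Phi$: the set of its classical models, and, for each such model $M$, the set of models of $P_\Phi^M$ contained in $M$. This data determines, for every $F$, the answer sets of $P_\Phi\cup F$ --- the models $M\supseteq F$ of $P_\Phi$ such that $P_\Phi^M$ has no model $M'$ with $F\subseteq M'\subsetneq M$ --- and hence whether $P_\Phi$ is super-coherent; properties (P1) and (P2) are the tools for reasoning about how $P_\Phi^M$ depends on $M$ and for checking that a concrete program matches the specification. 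The specification should ensure that: (i) when $F=F_\sigma$ encodes a total truth assignment $\sigma$ to $X$ by facts over dedicated ``input'' atoms occurring in no head of $P_\Phi$ (so that $F$ fully controls them), the answer sets of $P_\Phi\cup F_\sigma$ are exactly those encoding assignments $Y$ with $\phi(\sigma,Y)$ true, whence $P_\Phi\cup F_\sigma$ is coherent iff $\exists Y\,\phi(\sigma,Y)$; and (ii) for every other $F\subseteq\At(P_\Phi)$ there is at least one answer set of $P_\Phi\cup F$. I would then instantiate the specification by a concrete normal program --- in outline: input atoms for $X$; a nondeterministic choice for $Y$; clause-checking rules deriving a witness atom $\mathit{sat}$; a constraint forcing $\mathit{sat}$, written in the $co\la\ldots,\ \naf~co$ form so that $P_\Phi$ need not literally contain a constraint; and ``guard'' rules supplying fall-back answer sets --- and verify, using (P1)/(P2), that its (reduct-)models are exactly the specified ones. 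Together with Proposition~\ref{prop:restrict}, this gives $P_\Phi\in\ASPSC$ iff $\forall X\,\exists Y\,\phi(X,Y)$.

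\emph{Main obstacle.} The difficult point is item (ii): robustness of the construction under arbitrary additions of facts. A set $F$ may contain not only input atoms but also auxiliary atoms and the atoms used to guess $Y$, and a fact forcing a guessing atom to be true cannot in general be ignored. The specification therefore has to be engineered so that no such $F$ can eliminate all answer sets --- which would make a valid $\Phi$ yield a program that is not super-coherent --- while at the same time the guard rules inserted to ensure this must not create spurious answer sets, which would make an invalid $\Phi$ yield a super-coherent program. Reconciling these two requirements within a single normal program is the technical crux, and is exactly why the paper isolates the hardness direction; routing everything through the model/reduct-model specification is what keeps the bookkeeping under control. The remaining steps --- verifying that the reduction is polynomial and combining the two bounds into $\PiP{2}$-completeness --- are routine.
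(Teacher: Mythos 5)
Your membership argument is correct and is essentially the paper's: guess $F\subseteq\At(P)$ (justified by Proposition~\ref{prop:restrict}) and check incoherence of the normal program $P\cup F$, which is in $\CONP$ since answer-set checking for normal programs is polynomial via the least model of the Horn reduct; this puts the complement in $\SigmaP{2}$ and super-coherence in $\PiP{2}$.

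The hardness half, however, is a plan rather than a proof, and the part you leave open is precisely the content of the theorem. You correctly identify the methodology the paper uses --- fix a specification of the classical models and of the reduct models of $P_\Phi$, prove that any program meeting the specification is super-coherent iff $\forall X\exists Y\phi$ is true, then exhibit a concrete normal program meeting it --- and you correctly identify the crux, namely that an adversarial $F$ may inject auxiliary and $Y$-guessing atoms, so the program must always retain some answer set without ever acquiring a spurious one when $\Phi$ is false. But you then stop at ``reconciling these two requirements \ldots is the technical crux,'' which is an admission that the reduction has not been constructed. The paper resolves exactly this tension in Definition~\ref{def:rednormrefined:2} and Definition~\ref{def:norm_phi_program}: the models are of three shapes ($M[I]$ with witness atom $v$, $N[I,J]$ with $w$ for satisfying assignments, and a whole family $O[J^*]$ with $X\cup\overline{X}\cup\{v,w\}$ saturated and an arbitrary superset $J^*$ of a consistent $Y$-assignment, which is what absorbs arbitrary additions of $Y$-atoms); the saturation rules $z\la v,w$, $z\la x,\overline{x}$, $z\la y,\overline{y}$ provide the fall-back answer sets for inconsistent $F$, while the pair $w\la\naf\ v$ and $v\la\overline{l}_{i,1},\ldots,\overline{l}_{i,m_i}$ makes the clause check happen in the reduct so that $O[J]$ is an answer set exactly when $I\cup J\not\models\phi$ for the forced $J$, and $M[I]\setminus\{v\}$ kills $M[I]$ and (via property (P1)) all the $O[J^*]$ when $\Phi$ fails. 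One further detail of your sketch would need revisiting: you propose input atoms for $X$ that occur in no rule head, whereas the paper guesses $X$ internally with $x\la\naf\ \overline{x}$ and $\overline{x}\la\naf\ x$; with head-free input atoms the sets $F$ that encode no total $X$-assignment (e.g.\ $F=\emptyset$) must still be shown coherent, which your outline does not address. Until a concrete program is written down and verified against the specification, the reduction, and hence the $\PiP{2}$-hardness claim, is not established.
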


\subsection{Proof of Theorem~\ref{thm:disj}}\label{sec:proofdisj}

Membership follows by the following straight-forward nondeterministic 
algorithm for the complementary problem, 
i.e.\ given a program $P$, does there exist a set $F$ of facts such that
$\AS(P\cup F)=\emptyset$: we
guess a set $F\subseteq \At(P)$ and check $\AS(P\cup F)=\emptyset$ via an 
oracle-call.
Restricting the guess to $\At(P)$ can be done by Proposition~\ref{prop:restrict}. Checking $\AS(P\cup F)=\emptyset$ is known to be in $\PiP{2}$ \cite{EiterG95}.
This shows $\PiP{3}$-membership.

For the hardness we reduce the $\PiP{3}$-complete problem of deciding 
whether QBFs of the form 
$\forall X\exists Y\forall Z\phi$ are true to the problem of super-coherence.
Without loss of generality, we can consider $\phi$ to be in DNF and, indeed, 
$X\neq\emptyset$, $Y\neq \emptyset$, and $Z\neq\emptyset$.
We also assume that each disjunct of $\phi$ contains at least one variable from $X$,
one from $Y$ and one from $Z$.
More precisely, we shall construct for each such QBF $\Phi$ a program 
$P_\Phi$ such that $\Phi$ is true if and only if $P_\Phi$ is super-coherent.
Before showing how to actually construct $P_\Phi$ from $\Phi$ in polynomial time, we give the required properties for $P_\Phi$. We then show that 
for programs $P_\Phi$ satisfying these properties, the desired relation 
($\Phi$ is true if and only if $P_\Phi$ is super-coherent) holds, and finally 
we provide the construction of $P_\Phi$.

\begin{definition}\label{def:red}
Let $\Phi = \forall X\exists Y\forall Z\phi$ be a QBF with $\phi$ in DNF.
We call any program $P$ satisfying the following properties 
a \emph{$\Phi$-reduction}:
\begin{enumerate}
\item $P$ is given over atoms $U=X\cup Y\cup Z \cup 
\overline{X} \cup 
\overline{Y} \cup 
\overline{Z} \cup  \{u,v,w\}$, where all atoms in sets 
$\overline{S}=\{\overline{s} \mid s\in S\}$ ($S \in \{X,Y,Z\}$) and $\{u,v,w\}$ are fresh and mutually disjoint;
\item $P$ has the following models: 
\begin{itemize}
\item 
$U$;
\item 
for each 
$I\subseteq X$,
$J\subseteq Y$, 
$$
M[I,J]=I\cup \overline{(X\setminus I)}\cup J\cup \overline{(Y\setminus J)}\cup Z\cup \overline{Z}\cup \{u,v\}
$$
and
$$
M'[I,J]=I\cup \overline{(X\setminus I)}\cup J\cup \overline{(Y\setminus J)}\cup Z\cup \overline{Z}\cup \{v,w\};
$$
\end{itemize}
\item for each $I\subseteq X$, $J\subseteq Y$, the models%
\footnote{Here and below, for a reduct $P^M$ we only list models of the form $N\subseteq M$, since
those are the relevant ones for our purposes. Recall that $N=M$ is always a model of $P^M$ in case $M$ is a model of $P$.}
 of the reduct
$P^{M[I,J]}$ are $M[I,J]$ and
$$
O[I] = I\cup \overline{(X\setminus I)};
$$
\item for each $I\subseteq X$, $J\subseteq Y$, the models of the reduct
$P^{M'[I,J]}$ are $M'[I,J]$ and
\begin{itemize}
\item 
for each $K\subseteq Z$ such that $I\cup J \cup K\not \models \phi$,
$$
N[I,J,K] = I \cup \overline{(X\setminus I)}\cup J\cup \overline{(Y\setminus J)}\cup K\cup \overline{(Z\setminus K)}\cup \{v\};
$$
\end{itemize}
\item  the models of the reduct $P^U$ are $U$ itself, 
$M[I,J]$, $M'[I,J]$, and $O[I]$, for each $I\subseteq X$, $J\subseteq Y$, and
$N[I,J,K]$ for each $I\subseteq X$, $J\subseteq Y$, $K\subseteq Z$, such that $I\cup J \cup K\not \models \phi$.
\end{enumerate}

\end{definition}

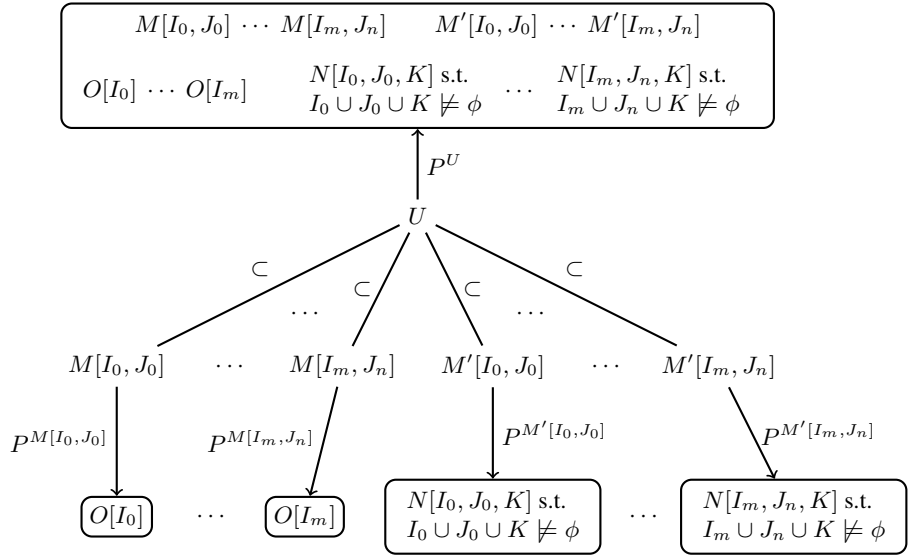
\begin{figure}
  
\tikzstyle{vertex}=[]
\tikzstyle{arc}=[draw,thick,->]
\tikzstyle{edge}=[draw,thick]
\tikzstyle{countermodels}=[draw,thick,rectangle, rounded corners]

\noindent
\begin{tikzpicture}
\node[vertex] (U) at (0,1) {$U$};
\node[vertex] (MI0J0) at (-4,-1) {$M[I_0,J_0]$};
\node[vertex] (dots1) at (-2.5,-1) {$\cdots$};
\node[vertex] (MImJn) at (-1,-1) {$M[I_m,J_n]$};
\node[vertex] (MpI0J0) at (1,-1) {$M'[I_0,J_0]$};
\node[vertex] (dots2) at (2.5,-1) {$\cdots$};
\node[vertex] (MpImJn) at (4,-1) {$M'[I_m,J_n]$};

\path[edge] (U) edge node[above left] {$\subset$} (MI0J0);
\node[vertex] (dots3) at (-1.5,-0.3) {$\cdots$};
\path[edge] (U) edge node[left] {$\subset$} (MImJn);

\path[edge] (U) edge node[right] {$\subset$} (MpI0J0);
\node[vertex] (dots4) at (1.5,-0.3) {$\cdots$};
\path[edge] (U) edge node[above right] {$\subset$} (MpImJn);


\node[countermodels] (CMMI0J0) at (-4,-3) {$O[I_0]$};
\path[arc] (MI0J0) edge node[left] {$P^{M[I_0,J_0]}$} (CMMI0J0);

\node[vertex] (dots5) at (-2.75,-3) {$\cdots$};

\node[countermodels] (CMMImJn) at (-1.5,-3) {$O[I_m]$};
\path[arc] (MImJn) edge node[left] {$P^{M[I_m,J_n]}$} (CMMImJn);

\node[countermodels] (CMMpI0J0) at (1,-3) {$
  \begin{array}{l}
   N[I_0,J_0,K] \mbox{\ s.t.}\\
   I_0 \cup J_0 \cup K \not \models \phi  
  \end{array}
$};
\path[arc] (MpI0J0) edge node[right] {$P^{M'[I_0,J_0]}$} (CMMpI0J0);

\node[vertex] (dots5) at (3,-3) {$\cdots$};

\node[countermodels] (CMMpImJn) at (5,-3) {$
  \begin{array}{l}
   N[I_m,J_n,K] \mbox{\ s.t.}\\
   I_m \cup J_n \cup K \not \models \phi  
  \end{array}
 $};
\path[arc] (MpImJn) edge node[right] {$P^{M'[I_m,J_n]}$} (CMMpImJn);

\node[countermodels] (CMU) at (0,3) {$
  \begin{array}{c}
   M[I_0,J_0]\ \cdots\ M[I_m,J_n]\qquad M'[I_0,J_0]\ \cdots\ M'[I_m,J_n]\\[2ex]  
   O[I_0]\ \cdots\ O[I_m] \qquad
   \begin{array}{l}
   N[I_0,J_0,K] \mbox{\ s.t.}\\
   I_0 \cup J_0 \cup K \not \models \phi  
  \end{array}
\ \cdots\ 
  \begin{array}{l}
   N[I_m,J_n,K] \mbox{\ s.t.}\\
   I_m \cup J_n \cup K \not \models \phi  
  \end{array}
  \end{array}
$};

\path[arc] (U) edge node[right] {$P^{U}$} (CMU);

\end{tikzpicture}

  \caption{Models and reduct ``countermodels'' of $\Phi$-reductions}
  \label{fig:phired}
\end{figure}

The structure of models of $\Phi$-reductions and the ``countermodels''
(see below what we mean by this term) of the relevant reducts is sketched in Figure~\ref{fig:phired}. The
center of the diagram contains the models of the $\Phi$-reduction and
their subset relationship. For each of the model the respective box
lists the ``countermodels,'' by which we mean those reduct models which can serve as counterexamples for the original model being an answer set, that is, those reduct models which are proper subsets of the original model.

We just note at this point that the models of the reduct $P^U$ 
given in Item~5 are not specified for particular purposes, but 
are required to allow for a realization via disjunctive programs. 
In fact, these models are just an effect of property (P1) mentioned
in Section~\ref{sec:preliminaries}. However, before 
showing a program satisfying the 
properties of
a \emph{$\Phi$-reduction}, we first show the rationale behind
the concept of $\Phi$-reductions.

\begin{lemma}
For any QBF $\Phi = \forall X\exists Y\forall Z\phi$ with $\phi$ in DNF,
a $\Phi$-reduction is super-coherent if and only if 
$\Phi$ is true.
\end{lemma}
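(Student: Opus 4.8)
The plan is to analyze, for a $\Phi$-reduction $P$, which sets $F\subseteq\At(P)$ of facts can possibly destroy coherence, and then to relate the surviving coherence condition to the truth of $\Phi$. By Proposition~\ref{prop:restrict} it suffices to consider $F\subseteq U$. First I would observe that the model structure from Definition~\ref{def:red} Item~2 is essentially rigid: the models of $P$ are $U$ together with the ``choice'' models $M[I,J]$ and $M'[I,J]$. Adding facts $F$ restricts $\AS(P\cup F)$ to those models of $P$ that contain $F$ (this uses that the models of $P\cup F$ are exactly the models of $P$ that are supersets of $F$, and that the reduct interacts well with added facts). So the whole question becomes: for which $F\subseteq U$ does $P\cup F$ have an answer set, and is that guaranteed for all such $F$ precisely when $\Phi$ is true?

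The key case analysis is by how $F$ interacts with the blocks $X\cup\overline X$, $Y\cup\overline Y$, $Z\cup\overline Z$, $\{u,v,w\}$. If $F$ forces inclusion of $u$ and $w$ simultaneously, or forces some atom whose only superset-model is $U$, then $U$ is the only candidate; one checks via Item~5 that $U$ is an answer set of $P\cup F$ exactly when none of the listed reduct ``countermodels'' of $P^U$ is a subset of $U$ containing $F$ — and here I expect the design to make $U$ robustly an answer set in the ``degenerate'' cases (or to let some $M$/$M'$ survive). The interesting case is when $F$ picks out a consistent assignment on $X$, i.e.\ $F$ is compatible with exactly one $I\subseteq X$ (so $F\cap(X\cup\overline X)$ has the form $I'\cup\overline{(X\setminus I)}'$ with $I'\subseteq I$, $\ldots$), does not mention $w$, and does not over-constrain $Y$ — then the surviving candidate answer sets among the models of $P$ are the $M[I,J]$ (and possibly $M'[I,J]$) for the compatible $J$'s. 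For $M[I,J]$ to be an answer set of $P\cup F$ we need that the only proper reduct submodel, namely $O[I]=I\cup\overline{(X\setminus I)}$ (Item~3), does \emph{not} contain $F$; since $F\supseteq$ something on $Y$-side in general this may fail, which is why $M[I,J]$ alone does not save us. The crucial point is $M'[I,J]$: by Item~4 its proper reduct submodels are exactly the $N[I,J,K]$ for $K\subseteq Z$ with $I\cup J\cup K\not\models\phi$, and each such $N[I,J,K]$ contains $J\cup\overline{(Y\setminus J)}$ on the $Y$-side but only $\{v\}$ (not $w$) — so if $F$ is chosen to contain $w$, every $N[I,J,K]$ is ruled out and $M'[I,J]$ is always an answer set; whereas if $F$ does not contain $w$ but does pin down $I$ on $X$, then $M'[I,J]$ is an answer set of $P\cup F$ iff no $N[I,J,K]\supseteq F$ is a model, i.e.\ iff there is no $K$ with $I\cup J\cup K\not\models\phi$ extending the $Z$-part dictated by $F$. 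Taking $F$ that fixes $I$ on $X$ and leaves $Y,Z$ free, $P\cup F$ is coherent iff some $J\subseteq Y$ makes $M'[I,J]$ (or $M[I,J]$) an answer set, and one shows this reduces to: $\exists J\subseteq Y$ such that for all $K\subseteq Z$, $I\cup J\cup K\models\phi$ — i.e.\ $\exists Y\forall Z\,\phi$ holds for this $I$. Ranging over all $F$, hence all $I\subseteq X$, super-coherence is equivalent to $\forall X\exists Y\forall Z\,\phi$, namely to $\Phi$ being true.

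Concretely I would structure the proof as two implications. For the ``if'' direction, assume $\Phi$ true and take an arbitrary $F\subseteq\At(P)$; split into the degenerate cases (where $U$ or some $M'[I,J]$ with $w\in F$ is immediately an answer set by Items~4--5, since the relevant countermodels are excluded by $F$) and the main case where $F$ determines a unique $I\subseteq X$ and $w\notin F$; pick the $J$ witnessing $\forall Z\,\phi$ for that $I$, and verify using Item~4 that $M'[I,J]$ (adjusted to contain $F$ — one checks $F\subseteq M'[I,J]$ is forced by the case assumptions) is a minimal model of $(P\cup F)^{M'[I,J]}$ because no $N[I,J,K]$ can be a model (all would require $I\cup J\cup K\not\models\phi$). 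For the ``only if'' direction, assume $\Phi$ false, pick $I\subseteq X$ with $\neg\exists Y\forall Z\,\phi$, let $F$ be the facts encoding this $I$ on the $X$-block (and nothing else), and show $\AS(P\cup F)=\emptyset$: every model of $P$ containing $F$ is $U$, some $M[I,J]$, or some $M'[I,J]$; $U$ fails because it has a proper reduct submodel $\supseteq F$ (e.g.\ $O[I]\supseteq F$, or an $M[I,J]$); $M[I,J]$ fails because $O[I]\subseteq M[I,J]$ is a proper reduct submodel and $F\subseteq O[I]$; and $M'[I,J]$ fails because, as $I\cup J\not\models\forall Z\,\phi$, there is $K$ with $I\cup J\cup K\not\models\phi$, giving the proper reduct submodel $N[I,J,K]\supseteq F$. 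The main obstacle I anticipate is the bookkeeping for exactly which $F\subseteq U$ are ``compatible'' with which models — i.e.\ pinning down the degenerate cases so that coherence there is automatic, and making sure the $F$ used in the hardness direction genuinely forces $I$ and is contained in all the relevant countermodels; the algebra is routine but the case split must be exhaustive, and getting the interplay of the $\{u,v,w\}$ markers with $O[I]$ versus $N[I,J,K]$ exactly right is where care is needed.
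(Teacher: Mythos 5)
Your overall strategy is the same as the paper's: reduce to $F\subseteq U$ via Proposition~\ref{prop:restrict}, observe that the answer-set candidates of $P\cup F$ are exactly the models of $P$ containing $F$, and translate ``some candidate survives for every $F$'' into the quantifier structure of $\Phi$. Your hardness direction (taking $F_\I=\I\cup\overline{(X\setminus \I)}$ and killing $U$ via $M[\I,J]$, killing $M[\I,J]$ via $O[\I]$, and killing $M'[\I,J]$ via some $N[\I,J,K]$) is exactly the paper's argument and is correct.

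There is, however, a gap in your ``if'' direction. Your main case is ``$F$ determines a unique $I$ and $w\notin F$,'' and there you claim that $F\subseteq M'[I,J]$ is \emph{forced} for the witness $J=J_I$ satisfying $\forall K\,(I\cup J_I\cup K\models\phi)$. That is false: if, say, $y\in F$ for some $y\in Y\setminus J_I$, then $M'[I,J_I]\not\supseteq F$, so your designated witness is not even a model of $P\cup F$; nor can you switch to the $J$ dictated by $F$, since for that $J$ some $N[I,J,K]$ may exist and contain $F$. The correct handling of this subcase---made explicit in the paper---is that whenever $F\subseteq M[I,J]$ but $F\not\subseteq O[I]$ (i.e., $F$ mentions anything outside $X\cup\overline{X}$), the model $M[I,J]$ itself is already an answer set, because $O[I]$ is the \emph{only} proper reduct countermodel of $P^{M[I,J]}$ and it fails to contain $F$. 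Your remark that ``since $F\supseteq$ something on the $Y$-side this may fail, which is why $M[I,J]$ alone does not save us'' has this exactly backwards: $Y$-side (or $Z$-side, or $u$, $v$) content in $F$ is precisely what makes $M[I,J]$ an answer set, and the witness $M'[I,J_I]$---hence the truth of $\Phi$---is needed only in the complementary subcase $F\subseteq O[I]$. With that case split corrected (and the degenerate cases $\{s,\overline{s}\}\subseteq F$ or $\{u,w\}\subseteq F$ handled by $U$, and $w\in F$ handled by $M'[I,J]$ as you indicate), the rest of your argument goes through.
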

\begin{proof}
Suppose that $\Phi$ is false. 
Hence, there exists an $\I\subseteq X$ such that, for all 
$J\subseteq Y$, there is a $K_Y\subseteq Z$ with 
$\I\cup J \cup K_Y\not\models \phi$.
Now let $P$ be any $\Phi$-reduction and 
$F_\I = \I\cup \overline{(X\setminus \I)}$. 
We show that $\AS(P\cup F_\I)=\emptyset$, 
thus $P$ is not super-coherent.
Let $\M$ be a model of $P \cup F_\I$.
Since $P$ is a $\Phi$-reduction, the 
only candidates for $\M$ are 
$U$, $M[\I,J]$, and $M'[\I,J]$, 
where $J\subseteq Y$.  Indeed, for each $I\neq\I$, 
$M[I,J]$ and $M'[I,J]$ cannot be models of $P\cup F_\I$ because
$F_\I\not\subseteq M[I,J]$, resp.\ $F_\I\not\subseteq M'[I,J]$.
We now analyze these three types of potential candidates:
\begin{itemize}
\item
$\M=U$: Then, for instance, $M[\I,J]\subset U$ is a model of 
$(P\cup F_\I)^\M=P^\M\cup F_\I$ for any $J \subseteq Y$. Thus, $\M\notin\AS(P\cup F_\I)$.
\item
$\M=M[\I,J]$ for some $J\subseteq Y$. Then, by the properties 
of $\Phi$-reductions,
$O[\I]\subset \M$ is a model of $(P\cup F_\I)^\M=P^\M\cup F_\I$.
Thus, $\M\notin\AS(P\cup F_\I)$.
\item 
$\M=M'[\I,J]$ for some $J\subseteq Y$. 
By the initial assumption, there exists a $K_Y\subseteq Z$ with 
$\I\cup J \cup K_Y\not\models \phi$.
Then, by the properties of $\Phi$-reductions,
$N[\I,J,K]\subset \M$ is a model of $P^\M$ and thus also of
$(P\cup F_\I)^\M$.
Hence, $\M\notin\AS(P\cup F_\I)$.
\end{itemize}

In each of the cases we have obtained $\M\notin\AS(P\cup F_\I)$, hence $\AS(P\cup F_\I)=\emptyset$ and $P$ is not super-coherent.
\medskip

\noindent
Suppose that $\Phi$ is true. It is sufficient to show that for each $F\subseteq U$, 
$\AS(P\cup F)\neq \emptyset$. We have the following cases:

If $\{s,\overline{s}\}\subseteq F$ for some $s\in X\cup Y$ or 
$\{u,w\}\subseteq F$. Then $U\in\AS(P\cup F)$ since $U$ is a model 
of $P\cup F$ and each potential model $M\subset U$ of the reduct
$P^U$ (see the properties of $\Phi$-reductions) does not satisfy $F\subseteq M$; thus each such $M$ is not a model of $P^U\cup F=(P\cup F)^U$. 

Otherwise, we have $F\subseteq M[I,J]$ or $F\subseteq M'[I,J]$ for some 
$I\subseteq X$, $J\subseteq Y$. In case $F\subseteq M[I,J]$ and $F\not\subseteq O[I]$, we 
observe that $M[I,J]\in\AS(P\cup F)$ since $O[I]$ is the only model 
(being a proper subset of $M[I,J]$)
of the reduct $P^{M[I,J]}$. Thus for each such $F$ there cannot be a model 
$M\subset M[I,J]$ of $P^{M[I,J]}\cup F=(P\cup F)^{M[I,J]}$. As well, 
in case $F\subseteq M'[I,J]$, where $w\in F$, 
$M'[I,J]$ can be shown to be an answer set of 
$P\cup F$. 
Note that the case $F\subseteq M'[I,J]$ with $w\notin F$ is already
taken care of since for such $F$ we have $F\subseteq M[I,J]$ as well.

It remains to consider the case $F\subseteq O[I]$ for each $I\subseteq X$.
We show that 
$M'[I,J]$ 
is an answer set of $P\cup F$,
for some $J\subseteq Y$. Since $\Phi$ is true, 
we know that, for each $I\subseteq X$, there exists a $J_I\subseteq Y$ 
such that, for all $K\subseteq Z$, $I\cup J_I\cup K\models \phi$. 
As can be verified by the properties of $\Phi$-reductions, then 
there is no model $M\subset M'[I,J_I]$ of $P^{M'[I,J_I]}$. Consequently,
there is also no such model of $(P\cup F)^{M'[I,J_I]}$, and thus 
$M'[I,J_I]\in\AS
(P\cup F)$.

So in each of these cases $\AS(P\cup F) \neq \emptyset$ and since
these cases cover all possible $F \subseteq U$, we obtain that $P$ is
super-coherent.

\medskip

In total we have shown that $\Phi$ being false implies that any
$\Phi$-reduction $P$ is not super-coherent, while $\Phi$ being true
implies that any $\Phi$-reduction is super-coherent, which proves the
lemma.
\end{proof}

It remains to show that for any QBF  
of the desired form, a 
$\Phi$-reduction can be obtained in polynomial time (w.r.t.\ the size of $\Phi$).
For the construction below, let us denote a negated atom $a$ in 
the propositional part of the QBF $\Phi$ as $\overline{a}$.

\begin{definition}\label{def:phi_program}
For any QBF $\Phi = \forall X\exists Y\forall Z\phi$ with 
$\phi=\bigvee_{i=1}^n l_{i,1}\land\cdots\land l_{i,m_i}$ a DNF (i.e., a disjunction of conjunctions over literals), we define
\begin{eqnarray}
\label{eq:red:first}
P_\Phi & = & 
\{  x \lor \overline{x} \la;\;
u \la x,\overline{x};\;
w \la x,\overline{x};\;
x \la u,w;\;
\overline{x} \la u,w\mid x\in X\}\cup{}\\
&&
\{  y \lor \overline{y} \la v;\;
u \la y,\overline{y};\;
w \la y,\overline{y};\;
y \la u,w;\; \nonumber \\
\label{eq:red:second}
&&\hphantom{\{}
\overline{y} \la u,w;\;
v \la y;\;
v \la \overline{y}
\mid y\in Y\}\cup{}\\
&&
\{  z \lor \overline{z} \la v;\;
u \la z,\naf\ w;\;
u \la \overline{z},\naf\ w;\;
v \la z;\;
v \la \overline{z};\; \nonumber \\
&&\hphantom{\{}
\label{eq:red:third}
z \la w;\;
\overline{z} \la w;\;
z \la u;\;
\overline{z} \la u;\;
w \vee u \leftarrow z, \overline{z}
\mid z\in Z\}\cup{}\\
\label{eq:red:fourth}
&& \{ w \vee u \la l_{i,1},\ldots,l_{i,m_i} \mid 1\leq i \leq  n\} \\
\label{eq:red:fifth}
&& \{
  v \la w;\;
  v \la u;\;
  v \la \naf\ u\}.
\end{eqnarray}
\end{definition}

The program above contains atoms associated with literals in $\Phi$ and three auxiliary atoms $u,v,w$.
Intuitively, truth values for variables in $X$ are guessed by rules in (\ref{eq:red:first}), and truth values for variables in $Y \cup Z$ are guessed by rules in (\ref{eq:red:second})--(\ref{eq:red:third}) provided that $v$ is true.
Moreover, rules in (\ref{eq:red:first})--(\ref{eq:red:second}) derive all atoms in $X \cup \overline{X} \cup Y \cup \overline{Y} \cup \{u,w\}$ whenever both $u$ and $w$ are true, or in case an inconsistent assignment for some propositional variable in $X \cup Y$ is forced by the addition of a set of facts to the program.
On the other hand, all atoms associated with variables $Z$ are inferred by rules in (\ref{eq:red:third}) if one of $u$ and $w$ is true.
Atoms $u$ and $w$ can be inferred for instance by some rule in (\ref{eq:red:fourth}) whenever truth values of atoms of $P_\Phi$ represent a satisfying assignment for $\phi$.
Furthermore, rules in (\ref{eq:red:fifth}) are such that every model of $P_\Phi$ contains $v$, and any answer set of $P_\Phi$ must also contain $w$.
We finally observe that $w$ can be derived by some rule in (\ref{eq:red:fourth}) in the program reduct, which is the case if truth values of atoms in $X \cup \overline{X} \cup Y \cup \overline{Y}$ represent a consistent assignment for propositional variables in $X \cup Y$ satisfying $\forall Z \phi$.

Obviously,  the program from above definition can be constructed in polynomial
time in the size of the reduced QBF. To conclude the proof of Theorem~\ref{thm:disj} it is thus sufficient to show the following relation.

\begin{lemma}
For any QBF $\Phi = \forall X\exists Y\forall Z\phi$, the program
$P_\Phi$ is a $\Phi$-reduction.
\end{lemma}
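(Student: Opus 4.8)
The goal is to verify that the program $P_\Phi$ from Definition~\ref{def:phi_program} satisfies all five properties in Definition~\ref{def:red}. I would organize the argument around the three groups of atoms ($X\cup\overline{X}$, $Y\cup\overline{Y}$, $Z\cup\overline{Z}$) and the three auxiliary atoms $u,v,w$, and proceed property by property in the order they are listed, since later items build on the model computations of earlier ones.

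First I would establish Item~1, which is immediate from inspection of the rules: every atom occurring in $P_\Phi$ lies in $U=X\cup Y\cup Z\cup\overline{X}\cup\overline{Y}\cup\overline{Z}\cup\{u,v,w\}$, and the freshness/disjointness of the barred atoms and of $u,v,w$ is a syntactic assumption we may make on $\Phi$. Next, for Item~2 I would characterize the classical models of $P_\Phi$. The key observations are: (a) the rules in (\ref{eq:red:fifth}), in particular $v\la\naf\ u$ together with $v\la u$, force $v$ into every model; (b) once $v$ holds, the disjunctive rules $x\lor\overline{x}\la$, $y\lor\overline{y}\la v$, $z\lor\overline{z}\la v$ force at least one of each complementary pair to be true; (c) the ``collapse'' rules $u\la x,\overline{x}$, $w\la x,\overline{x}$, $x\la u,w$, $\overline{x}\la u,w$ (and their $Y$-analogues) force that if both members of some $X$- or $Y$-pair are true, then $u,w$ and hence \emph{all} of $X\cup\overline{X}\cup Y\cup\overline{Y}\cup\{u,w\}$ are true; similarly the $Z$-rules $z\la w$, $\overline{z}\la w$, $z\la u$, $\overline{z}\la u$ force all of $Z\cup\overline{Z}$ whenever $u$ or $w$ holds. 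A short case analysis on which of $u,w$ are in the model then yields exactly the models listed: if both $u,w$ hold we get $U$; if exactly $u$ holds we get some $M[I,J]$ (note $z\lor\overline z$ is satisfied since $Z\cup\overline Z\subseteq$ the model, and one checks the DNF rules (\ref{eq:red:fourth}) and the rule $w\vee u\la z,\overline z$ are satisfied because $u$ is already true); if exactly $w$ holds we get some $M'[I,J]$; and if neither $u$ nor $w$ holds, the consistency rules force a proper assignment on $X$ but then $v\la\naf\ u$ still gives $v$, while $v\la y$ etc.\ are satisfied---one checks no such model survives because $v$ forces $z\lor\overline z$ and the rule $w\vee u\la z,\overline z$ then forces $u$ or $w$. (A subtlety: the models $M[I,J]$ and $M'[I,J]$ only differ in containing $u$ versus $w$, and the barred $Z$-atoms are \emph{all} present in both; care is needed that no ``partial $Z$'' model exists classically, which follows from the $w$/$u$-collapse rules.)

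For Items~3, 4 and 5 I would compute the relevant reducts. For $P^{M[I,J]}$: since $w\notin M[I,J]$, the rules $u\la z,\naf\ w$ and $u\la\overline z,\naf\ w$ survive (with $\naf\ w$ deleted), becoming $u\la z$ and $u\la\overline z$; since $u\in M[I,J]$, the rule $v\la\naf\ u$ is deleted. One then checks that the subset-minimal-looking candidates $N\subseteq M[I,J]$ reduce to $O[I]=I\cup\overline{(X\setminus I)}$ and $M[I,J]$ itself: $O[I]$ satisfies the reduct because the only rules with nonempty head-obligation not automatically met are the $X$-disjunctions $x\lor\overline x\la$, which $O[I]$ satisfies by construction, while $v\la\naf\ u$ is gone and $v\la y$, $v\la z$ etc.\ have false bodies under $O[I]$; conversely no model strictly between $O[I]$ and $M[I,J]$ survives because adding any $Y$-, $Z$- or $\{u,v\}$-atom triggers further rules that force all the way up to $M[I,J]$. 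For $P^{M'[I,J]}$: since $w\in M'[I,J]$ the rules $u\la z,\naf\ w$, $u\la\overline z,\naf\ w$ are \emph{deleted}, and since $u\notin M'[I,J]$ the rule $v\la\naf\ u$ survives as $v\la$. Now the $Z$-atoms are no longer forced by $u$ (which is absent), so the disjunctions $z\lor\overline z\la v$ can be satisfied by genuinely partial choices $K\subseteq Z$; the remaining constraint is the bundle of DNF rules $w\vee u\la l_{i,1},\ldots,l_{i,m_i}$ together with $w\vee u\la z,\overline z$, which under a model containing neither $u$ (absent) can only be satisfied via $w$, but $w$ is \emph{not} in the candidate $N[I,J,K]$---so such a candidate survives precisely when no disjunct $l_{i,1}\land\cdots\land l_{i,m_i}$ is satisfied by the assignment $I\cup J\cup K$, i.e.\ when $I\cup J\cup K\not\models\phi$, and when no pair $z,\overline z$ is simultaneously chosen (which holds for $N[I,J,K]$ by construction). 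This gives exactly Item~4. Finally, Item~5 follows from Item~3, Item~4 and property (P1): since $M[I,J]$, $M'[I,J]$, $O[I]$, and the various $N[I,J,K]$ are all subsets of $U$ and satisfy the reducts w.r.t.\ $M[I,J]$ resp.\ $M'[I,J]$, property (P1) lifts each to a model of $P^U$; and one checks conversely by the same forcing arguments as in Item~2 that these (together with $U$) are \emph{all} models of $P^U$.

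\textbf{Main obstacle.} The delicate part is not the ``easy direction'' (showing the listed interpretations \emph{are} models / reduct models---that is rule-by-rule checking) but the \emph{completeness} claims: that the lists in Items~2--5 are exhaustive. This requires carefully tracking how the ``collapse'' rules propagate truth through the three blocks, and in particular arguing that there is no stray model with a \emph{partial} assignment on $Z$ among the classical models of $P_\Phi$ (whereas such partial models \emph{do} appear in $P^{M'[I,J]}$, because deleting $u\la z,\naf\ w$ and $u\la\overline z,\naf\ w$ is exactly what removes the upward pressure). I would handle this by a disciplined case split on the subset $\{u,w\}\cap M$ (four cases) for the classical models, and by a similarly disciplined analysis of which rules survive each reduct, being especially attentive to the interaction between $v\la\naf\ u$ in (\ref{eq:red:fifth}) and the $\naf\ w$ in the $Z$-rules of (\ref{eq:red:third}), as these two negative literals are what make the reducts $P^{M[I,J]}$ and $P^{M'[I,J]}$ behave so differently.
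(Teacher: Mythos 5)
Your plan follows essentially the same route as the paper's proof: verify the five items of Definition~\ref{def:red} one by one, characterize the classical models by a case split on $\{u,w\}\cap M$ using the ``collapse'' rules, and then analyze which rules survive in each reduct (the deletion of $v\la\naf\ u$ in $P^{M[I,J]}$ versus the deletion of $u\la z,\naf\ w$ in $P^{M'[I,J]}$ being the crux). Items~1, 3, 4 and 5 are argued correctly, and your observation that Item~5 can be obtained from Items~3--4 via property (P1) plus a completeness check is a legitimate (mild) shortcut over the paper's direct analysis of $P^U$.

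There is, however, one concrete misstep in your Item~2 argument, in the case where a model $M$ contains neither $u$ nor $w$. You claim this case is excluded because ``$v$ forces $z\lor\overline z$ and the rule $w\vee u\la z,\overline z$ then forces $u$ or $w$.'' That rule has the \emph{conjunction} $z,\overline{z}$ as its body, so it only fires when both members of a $Z$-pair are true; a candidate such as $I\cup\overline{(X\setminus I)}\cup J\cup\overline{(Y\setminus J)}\cup K\cup\overline{(Z\setminus K)}\cup\{v\}$ with a proper (one-per-pair) assignment on $Z$ satisfies it vacuously. What actually kills this candidate are the rules $u\la z,\naf\ w$ and $u\la\overline{z},\naf\ w$: since $v\in M$ forces at least one of $z,\overline{z}$ into $M$ and $w\notin M$, these rules force $u\in M$, contradicting the case hypothesis. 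This exclusion is essential --- if such models survived, Item~2's exhaustive list would be wrong and the whole reduction would fail --- so the fix matters, even though it is a one-line repair using rules you already identified as important elsewhere in your analysis of the reducts.
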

\begin{proof}
Obviously, $\At(P_\Phi)$ contains the atoms as required in 
1) of Definition~\ref{def:red}.
We continue to show 2). 
To see that $U$ is a model of $P_\Phi$ is obvious. 
We next show that 
the remaining models $M$ are all of the form $M[I,J]$ or $M'[I,J]$. 
First we have $v\in M$ because of the rules $v\la u$ and $v\la \naf\ u$ in (\ref{eq:red:fifth}). 
In case $w\in M$, $Z\cup \overline{Z}\subseteq M$ 
by the rules in (\ref{eq:red:third}).
In case $w\notin M$, 
we have  
$K\cup \overline{(Z\setminus K)} \subseteq M$ for some $K\subseteq Z$, since $v\in M$ 
and by (\ref{eq:red:third}). 
But then, since $w\notin M$, $u\in M$ holds (rules $u\la z,\naf\ w$ resp.\
$u\la \overline{z},\naf\ w$).
Hence, also here $Z\cup \overline{Z}\subseteq M$.
In both cases, we observe that by (\ref{eq:red:first}) and (\ref{eq:red:second}), 
$I\cup \overline{(X\setminus I)}\cup J\cup \overline{(Y\setminus J)}\subseteq M$, for some $I\subseteq X$ and $J\subseteq Y$. This yields the desired models, 
$M[I,J]$, $M'[I,J]$. It can be checked that no other model exists by 
showing that for $N\not\subseteq M[I,J]$, resp.\ $N\not\subseteq M'[I,J]$, 
$N=U$ follows.

We next show that, for each $I\subseteq X$ and $J\subseteq Y$, 
$P^{M[I,J]}$ and $P^{M'[I,J]}$ possess the required models.
Let us start by showing that $O[I]$ is a model of $P^{M[I,J]}$. In fact,
it can be observed that all of the rules of the form $x \vee \overline{x} \leftarrow$
in (\ref{eq:red:first})
are satisfied because either $x$ or $\overline{x}$ belong to $O[I]$,
while all of the other rules in $P^{M[I,J]}$ are satisfied because of a false body literal.
We also note that each strict subset of $O[I]$ does not satisfy some rule
of the form $x \vee \overline{x} \leftarrow$, and thus it is not a model of $P^{M[I,J]}$.
Similarly, any interpretation $W$ such that $O[I] \subset W \subset M[I,J]$
does not satisfy some rule in $P^{M[I,J]}$ (note that rules of the form
$u \leftarrow z$ and $u \leftarrow \overline{z}$ occur in $P^{M[I,J]}$
because $w \not\in M[I,J]$; such rules are obtained by rules in (\ref{eq:red:third})).

Let us now consider $P^{M'[I,J]}$ and let $W \subseteq M'[I,J]$ be one of its models.
We shall show that either $W = M'[I,J]$, or $W = N[I,J,K]$ for some 
$K \subseteq Z$ such that $I \cup J \cup K \not\models \phi$.
Note that $v$ is a fact in $P^{M'[I,J]}$, hence $v$ must belong to $W$.
By (\ref{eq:red:first}) and (\ref{eq:red:second}), since $v \in W$ and $W \subseteq M'[I,J]$,
we can conclude that all of the atoms in 
$I\cup \overline{(X\setminus I)}\cup J\cup \overline{(Y\setminus J)}$
belong to $W$. Consider now the atom $w$. If $w$ belongs to $W$, by the rules in
(\ref{eq:red:third}) we conclude that all of the atoms in $Z \cup \overline{Z}$ belong to $W$,
and thus $W = M'[I,J]$. Otherwise, if $w \not\in W$,
by the rules of the form $z \vee \overline{z} \leftarrow v$ in (\ref{eq:red:third}), there must
be a set $K \subseteq Z$ such that $K \cup \overline{(Z\setminus K)}$ is 
contained in $W$. Note that no other atoms in $Z \cup \overline{Z}$ can belong to $W$
because of the last rule in (\ref{eq:red:third}). Hence, $W = N[I,J,K]$.
Moreover, $w \not\in W$ and $u \not\in W$ imply that 
$I \cup J \cup K \not\models \phi$ holds
because of (\ref{eq:red:fourth}).

Finally, one can show that $P^U$ does not yield additional models as 
those which are already present by other models.
Let $W \subseteq U$ be a model of $P^U$.
By (\ref{eq:red:first}), $O[I] \subseteq W$ must hold for some $I \subseteq X$.
Consider now the atom $v$. If $v \not\in W$, we conclude that the model $W$ is
actually $O[I]$. We can thus consider the other case, i.e. $v \in W$.
By (\ref{eq:red:second}), $J \cup \overline{(Y \setminus J)} \subseteq W$ must hold
for some $J \subseteq Y$. 
Consider now the atom $u$. If $u \in W$, we have $Z \cup \overline{Z} \subseteq W$
because of (\ref{eq:red:third}).
If no other atom belongs to $W$, then $W = M[I,J]$ holds.
Otherwise, if any other atom belongs to $W$,
it can be checked that $W$ must be equal to $U$.
We can then consider the case in which $u \not\in W$, and the atom $w$.
Again, we have two possibilities.
If $w$ belongs to $W$, by (\ref{eq:red:third})
we conclude that all of the atoms in $Z \cup \overline{Z}$ belong to $W$,
and thus either $W = M'[I,J]$ or $W = U$. Otherwise, if $w \not\in W$,
by the rules of the form $z \vee \overline{z} \leftarrow v$ in (\ref{eq:red:third}), there must
be a set $K \subseteq Z$ such that $K \cup \overline{(Z\setminus K)}$ is 
contained in $W$. Note that no other atoms in $Z \cup \overline{Z}$ can belong to $W$
because of the last rule in (\ref{eq:red:third}). Hence, $W = N[I,J,K]$.
Moreover, because of (\ref{eq:red:fourth}), $w \not\in W$ and $u \not\in W$ imply that 
$I \cup J \cup K \not\models \phi$ holds.
\end{proof}

Note that the program from Definition~\ref{def:phi_program} does not contain
constraints. As a consequence, the $\PiP{3}$-hardness result presented in this section
also holds if we only consider disjunctive \ASP\ programs without constraints.

\subsection{Proof of Theorem~\ref{thm:norm}}\label{sec:proofnorm}

Membership follows by the straight-forward nondeterministic 
algorithm for the complementary problem presented in the previous section.
We have just to note that a $co-\NP$ oracle can be used for checking the 
consistency of a normal program.
Thus, $\PiP{2}$-membership is established.

\bigskip

For the hardness we reduce the $\PiP{2}$-complete problem of deciding 
whether QBFs of the form 
$\forall X\exists Y\phi$ are true to the problem of super-coherence.
Without loss of generality, we can consider $\phi$ to be in CNF and, indeed, 
$X\neq\emptyset$, and $Y\neq \emptyset$.
We also assume that each clause of $\phi$ contains at least one variable from $X$
and one from $Y$.
More precisely, we shall adapt the notion
of $\Phi$-reduction to normal programs. In particular, we have to take into
account a fundamental difference between disjunctive and normal programs:
while disjunctive programs allow for using disjunctive rules for guessing
a subset of atoms, such a guess can be achieved only by means of unstratified 
negation within a normal program. For example, one atom in a set $\{x,y\}$
can be guessed by means of the following disjunctive rule: $x \vee y \la$.
Within a normal program, the same result can be obtained by means of the
following rules: $x \la \naf\ y$ and $y \la \naf\ x$. However, these last rules
would be deleted in the reduced program associated with a model containing 
both $x$ and $y$, which would allow for an arbitrary subset of $\{x,y\}$
to be part of a model of the reduct. More generally speaking, we have to 
take Property (P2), as introduced in Section~\ref{sec:preliminaries},
into account. This makes the following definition a bit more cumbersome
compared to Definition \ref{def:red}.

\begin{definition}\label{def:rednormrefined:2}
Let $\Phi = \forall X\exists Y\phi$ be a QBF with $\phi$ in CNF.
We call any program $P$ satisfying the following properties 
a \emph{$\Phi$-norm-reduction}:
\begin{enumerate}
\item $P$ is given over atoms $U=X\cup Y\cup 
\overline{X} \cup 
\overline{Y} \cup 
\{v,w\}$, where all atoms in sets 
$\overline{S}=\{\overline{s} \mid s\in S\}$ ($S \in \{X,Y,Z\}$) and $\{v,w\}$ are fresh and mutually disjoint;
\item $P$ has the following models: 
\begin{itemize}
\item 
for each $J \subseteq Y$, and for each $J^*$ such that
$J \cup \overline{(Y \setminus J)} \subseteq J^* \subseteq Y \cup \overline{Y}$
$$
O[J^*] = X \cup \overline{X} \cup J^* \cup \{v,w\};
$$
\item 
for each 
$I\subseteq X$,
$$
M[I]=I\cup \overline{(X\setminus I)}\cup \{v\};
$$
\item 
for each 
$I\subseteq X$, $J\subseteq Y$, such that
$I\cup J\models \phi$,
$$
N[I,J]=
I\cup \overline{(X\setminus I)}\cup 
J\cup \overline{(Y\setminus J)}\cup 
 \{w\};
$$
\end{itemize}
\item  the only models of a reduct 
$P^{M[I]}$ are $M[I]$ and $M[I] \setminus \{v\}$; the only model of a reduct  
$P^{N[I,J]}$ is $N[I,J]$;
\item each model $M$ of a reduct $P^{O[J^*]}$ satisfies the following properties:
 \begin{enumerate}
  \item 
  for each $y \in Y$ such that $y \in O[J^*]$ and $\overline{y} \not\in O[J^*]$,
  if $w \in M$, then $y \in M$;
  \item
  for each $y \in Y$ such that $\overline{y} \in O[J^*]$ and 
  $y \not\in O[J^*]$, if $w \in M$, then $\overline{y} \in M$;
  \item
  if $(Y \cup \overline{Y}) \cap M \neq \emptyset$, then $w \in M$;
  \item
  if there is a clause $l_{i,1} \vee \cdots \vee l_{i,m_i}$ of $\phi$ such that
  $\{\overline{l}_{i,1},\ldots,\overline{l}_{i,m_i}\} \subseteq M$, then
  $v \in M$;
  \item
  if there is an $x \in X$ such that $\{x,\overline{x}\} \subseteq M$,
  or there is a $y \in Y$ such that $\{y,\overline{y}\} \subseteq M$,
  or $\{v,w\} \subseteq M$, then it must hold that 
  $X \cup \overline{X} \cup \{v,w\} \subseteq M$.
 \end{enumerate}
\end{enumerate}
\end{definition}

Similarly as in the previous section,
the models of the reducts $P^{O[J^*]}$ 
as given in Item~4 are not specified for particular purposes, but 
are required to allow for a realization via normal programs taking
the set of models specified in Items~2 and~3 as well as 
properties (P1) and (P2) 
from Section~\ref{sec:preliminaries} into account.

\begin{lemma}
For any QBF $\Phi = \forall X\exists Y\phi$ with $\phi$ in CNF,
a $\Phi$-norm-reduction is super-coherent if and only if 
$\Phi$ is true.
\end{lemma}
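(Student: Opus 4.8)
The plan is to mirror the structure of the disjunctive case (the $\Phi$-reduction lemma), adapting it to the two-quantifier prefix $\forall X\exists Y\phi$ with $\phi$ in CNF, and to work entirely with $\Phi$-norm-reductions via the abstract properties in Definition~\ref{def:rednormrefined:2} together with Proposition~\ref{prop:restrict}, so that it suffices to check $\AS(P\cup F)\neq\emptyset$ for $F\subseteq U$. As before the proof splits into two implications. For the ``only if'' direction I would assume $\Phi$ is false, so there is some $\I\subseteq X$ such that for every $J\subseteq Y$ we have $\I\cup J\not\models\phi$, and take $F_\I=\I\cup\overline{(X\setminus\I)}$. I would then argue that every model $\M$ of $P\cup F_\I$ fails to be an answer set: the candidate models compatible with $F_\I$ are (i) the sets $O[J^*]$, (ii) $M[\I]$, and (iii) the sets $N[\I,J]$ with $\I\cup J\models\phi$. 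Case (iii) is vacuous because $\Phi$ false kills all such $J$ for this $\I$. For $\M=M[\I]$, property~3 gives that $M[\I]\setminus\{v\}$ is a model of $P^{M[\I]}$, hence of $(P\cup F_\I)^{M[\I]}=P^{M[\I]}\cup F_\I$ (since $v\notin F_\I$), so $\M$ is not minimal. For $\M=O[J^*]$, I would exhibit a proper subset of $O[J^*]$ that still satisfies $P^{O[J^*]}\cup F_\I$; the natural candidate is $M[\I]=\I\cup\overline{(X\setminus\I)}\cup\{v\}$ (or possibly a set obtained by stripping the $Y\cup\overline Y$ atoms and $w$), and I would verify via properties~4(a)--4(e) that this set indeed satisfies the reduct and is a proper subset, using that $\I\cup J\not\models\phi$ to keep $v$ derivable where needed. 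Thus $\AS(P\cup F_\I)=\emptyset$.

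For the ``if'' direction I would assume $\Phi$ is true, so for every $I\subseteq X$ there is $J_I\subseteq Y$ with $I\cup J_I\models\phi$, and show $\AS(P\cup F)\neq\emptyset$ for every $F\subseteq U$. I would distinguish cases on $F$ analogously to the disjunctive proof. If $F$ forces an inconsistency --- $\{s,\overline s\}\subseteq F$ for some $s\in X\cup Y$, or $\{v,w\}\subseteq F$ --- then by property~4(e) any model of the reduct $P^{O[J^*]}$ that contains $F$ must contain all of $X\cup\overline X\cup\{v,w\}$, and I would pick the appropriate $O[J^*]$ (with $J^*$ large enough to contain $F\cap(Y\cup\overline Y)$) and argue it is an answer set of $P\cup F$ because no proper subset of it satisfies $P^{O[J^*]}\cup F$. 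Otherwise $F$ is consistent and contained in some $O[J^*]$, $M[I]$, or $N[I,J]$; the key subcase is $F\subseteq M[I]$ (equivalently $w\notin F$ and $F$ induces a partial consistent assignment on $X$): here I would use $\Phi$ true to choose $J_I$ with $I\cup J_I\models\phi$, so $N[I,J_I]$ is a model of $P\cup F$, and property~3 says its reduct $P^{N[I,J_I]}$ has $N[I,J_I]$ as its only model, whence $N[I,J_I]\in\AS(P\cup F)$. The remaining subcases ($F\subseteq N[I,J]$ with $w\in F$, and $F\subseteq O[J^*]$ in the consistent case not already covered) I would handle by the same recipe, always landing on one of $N[I,J]$, $M[I]$, or a suitable $O[J^*]$ as the witnessing answer set, and invoking properties~3 and~4 to rule out proper submodels of the reduct.

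The main obstacle I expect is the bookkeeping in the ``if'' direction, specifically verifying minimality of the candidate answer set using only the abstract properties~4(a)--4(e) rather than an explicit program: property~4 gives necessary conditions on models of $P^{O[J^*]}$, and I need to turn these into a guarantee that no proper subset of the chosen $O[J^*]$ satisfies the reduct extended by $F$. The argument has to exploit that $F$ pins down enough atoms (via property~4(e) or via the forced $Y$-atoms in 4(a)--4(b) when $w$ is present) to collapse the lattice of reduct-models down to $O[J^*]$ itself. A secondary subtlety, flagged in the definition's preamble, is that property (P2) for normal programs means reduct-models of the ``guessing'' part are closed under intersection, so in the false-$\Phi$ case I must make sure the proper submodel I exhibit for $O[J^*]$ is genuinely forced to be a model of the reduct and not merely of the original program. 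Once these two points are settled the two implications combine exactly as in the disjunctive lemma to yield the stated equivalence.
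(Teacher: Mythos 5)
Your overall strategy coincides with the paper's: restrict to $F\subseteq U$ via Proposition~\ref{prop:restrict}, refute super-coherence with $F_\I=\I\cup\overline{(X\setminus\I)}$ when $\Phi$ is false, and do a case analysis on $F$ with witnesses drawn from $M[I]$, $N[I,J]$, $O[J^*]$ when $\Phi$ is true. However, there is a genuine gap in your ``only if'' direction, precisely at the point you flag as a ``secondary subtlety.'' To rule out $O[J^*]$ as an answer set of $P\cup F_\I$ you must exhibit a \emph{proper subset of $O[J^*]$ that is a model of $P^{O[J^*]}$} and contains $F_\I$. You propose to ``verify via properties 4(a)--4(e)'' that $M[\I]$ (or a stripped-down variant) is such a model, but property~4 only states \emph{necessary} conditions that every model of $P^{O[J^*]}$ satisfies; Definition~\ref{def:rednormrefined:2} deliberately does not enumerate the models of $P^{O[J^*]}$, so checking 4(a)--4(e) can never certify that a given set \emph{is} a model of that reduct. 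The abstract properties do give you a countermodel, but by a different route: property~3 says $M[\I]\setminus\{v\}$ is a model of $P^{M[\I]}$, and since $M[\I]\setminus\{v\}\subseteq M[\I]\subseteq O[J^*]$, property (P1) lifts it to a model of $P^{O[J^*]}$; it contains $F_\I$ and is a proper subset of every $O[J^*]$, so no $O[J^*]$ is an answer set. This is the step your proposal is missing (and note it is (P1), not (P2), that does the work here).

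In the ``if'' direction your recipe is right in spirit but the bookkeeping you defer hides the two nontrivial cases. First, when $F\subseteq M[I]$ with $v\in F$, your proposed witness $N[I,J_I]$ does not contain $F$ (it lacks $v$), so it is not even a model of $P\cup F$; the witness must be $M[I]$ itself, which works because the only other reduct model $M[I]\setminus\{v\}$ fails to contain $F$. Second, when $I\cup\overline{(X\setminus I)}\subset F\subseteq N[I,J]$ with $I\cup J\not\models\phi$, no $M[I]$ or $N[I,\cdot]$ contains $F$ as a model of $P\cup F$, and one must show $O[J\cup\overline{(Y\setminus J)}]$ is an answer set by chaining 4(c) (to force $w$), 4(a)--4(b) (to force the $Y$-assignment), 4(d) \emph{together with the hypothesis $I\cup J\not\models\phi$} (to force $v$), and 4(e) (to force the rest); your obstacle paragraph mentions 4(a), 4(b) and 4(e) but omits the essential use of 4(d), which is where the falsity of $\phi$ under $I\cup J$ enters. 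With the (P1) argument supplied and these two subcases made explicit, your proof matches the paper's.
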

\begin{proof}
Suppose that $\Phi$ is false. 
Hence, there exists an $\I\subseteq X$ such that, for all 
$J\subseteq Y$, 
$\I\cup J \not\models \phi$.
Now, let $P$ be any $\Phi$-norm-reduction and 
$F_\I = \I\cup \overline{(X\setminus \I)}$. 
We show that $\AS(P\cup F_\I)=\emptyset$, 
thus $P$ is not super-coherent.
Let $\M$ be a model of $P \cup F_\I$.
Since $P$ is a $\Phi$-norm-reduction, the 
only candidates for $\M$ are 
$O[J^*]$ for some $J \subseteq Y$ and $J^*$ such that
$J \cup \overline{(Y \setminus J)} \subseteq J^* \subseteq Y \cup \overline{Y}$,
$M[\I]$, and $N[\I,J']$, 
where $J'\subseteq Y$ satisfies $\I\cup J'\models \phi$.
However, from our assumption 
(for all
$J\subseteq Y$,
$\I\cup J \not\models \phi$),
no such $N[\I,J']$ exists.
Thus, it remains to consider $O[J^*]$ and $M[\I]$.
By the properties of $\Phi$-norm-reductions, $M[\I] \setminus \{v\}$ is a
model of $P^{M[\I]}$, and hence $M[\I] \setminus \{v\}$ is also a model of
$P^{M[\I]} \cup F_\I = (P \cup F_\I)^{M[\I]}$. Thus, $M[\I]$ is not an answer
set of $P \cup F_\I$.
Due to property (P1),
$M[\I] \setminus \{v\}$ is also a model of $P^{O[J^*]} \cup F_\I = (P \cup F_\I)^{O[J^*]}$, for any $O[J^*]$ and so none of these $O[J^*]$ are answer sets of $P \cup F_\I$ either.
Since this means that no model of $P \cup F_\I$ is an answer set, we conclude $\AS(P \cup F_\I) = \emptyset$, and hence $P$ is not super-coherent.
\medskip

\noindent
Suppose that $\Phi$ is true. It is sufficient to show that, for each $F\subseteq U$, 
$\AS(P\cup F)\neq \emptyset$. We distinguish the following cases for $F\subseteq U$:

$F\subseteq I\cup \overline{(X\setminus I)}\cup \{v\}$ for some $I\subseteq X$:
If $v \in F$, then $M[I]$ is the unique model of $P^{M[I]} \cup F = (P \cup F)^{M[I]}$,
and thus $M[I] \in \AS(P\cup F)$.
Otherwise, if $v \notin F$, since $\Phi$ is true, there exists a $J \subseteq Y$
such that $I\cup J \models \phi$. Thus, $N[I,J]$ is a model of $P\cup F$, and since no subset of $N[I,J]$ is a model of $(P\cup F)^{N[I,J]}$ (by property 3 of $\Phi$-norm-reductions), $N[I,J]\in \AS(P\cup F)$.

$I \cup \overline{(X \setminus I)} \subset F \subseteq N[I,J]$ for some
$I \subseteq X$ and $J\subseteq Y$ such that $I \cup J \models \phi$:
In this case $N[I,J]$ is a model of $P \cup F$ and, by property 3 of
$\Phi$-norm-reductions, $N[I,J]$ is also the unique model of 
$P^{N[I,J]} \cup F = (P \cup F)^{N[I,J]}$.

$I \cup \overline{(X \setminus I)} \subset F \subseteq N[I,J]$ for some
$I \subseteq X$ and $J\subseteq Y$ such that $I \cup J \not\models \phi$:
We shall show that $O[J]$ is an answer set of $P \cup F$ in this case.
Let $M$ be a model of $P^{O[J]} \cup F = (P \cup F)^{O[J]}$.
Since $I \cup \overline{(X \setminus I)} \subset F \subseteq N[I,J]$,
either $w \in F$ or $(Y \cup \overline{Y}) \cap F \neq \emptyset$.
Clearly, $F \subseteq M$ and so $w \in M$ in the first case.
Note that $w \in M$ holds also in the second case because of property 4(c) of $\Phi$-norm-reductions.
Thus, as a consequence of
properties 4(a) and 4(b) of $\Phi$-norm-reductions, $J \cup \overline{(Y \setminus J)} \subseteq M$ holds.
Since $I \cup J \not\models \phi$ and because of property 4(d) of $\Phi$-norm-reductions,
$v \in M$ holds.
Finally, since $\{v,w\} \subseteq M$ and because of property 4(e) of $\Phi$-norm-reductions,
$X \cup \overline{X} \subseteq M$ holds and, thus, $M = O[J]$.

In all other cases, either $\{v,w\} \subseteq F$, or there is an $x \in X$
such that $\{x,\overline{x}\} \subseteq F$, or there is a $y \in Y$ such that
$\{y,\overline{y}\} \subseteq F$.
We shall show that in such cases there is an $O[J^*]$ which is an answer set of
$P \cup F$.
Let $O[J^*]$ be such that $J^* = F \cap (Y \cup \overline{Y})$ and 
let $M$ be a model of $P^{O[J^*]} \cup F = (P \cup F)^{O[J^*]}$
such that $M \subseteq O[J^*]$.
We shall show that $O[J^*] \subseteq M$ holds, which would imply that
$O[J^*] = M$ is an answer set of $P\cup F$.
Clearly, $F \subseteq M$ holds.
By property 4(e) of $\Phi$-norm-reductions, $X \cup \overline{X} \cup \{v,w\} \subseteq M$ holds.
Thus, by property 4(a) of $\Phi$-norm-reductions and because $w \in M$, 
it holds that $y \in M$ for each $y \in Y$ such that 
$y \in O[J^*]$ and $\overline{y} \notin O[J^*]$.
Similarly, by property 4(b) of $\Phi$-norm-reductions and because $w \in M$, 
it holds that $\overline{y} \in M$ for each $y \in Y$ such that 
$\overline{y} \in O[J^*]$ and $y \notin O[J^*]$.
Moreover, for all $y \in Y$ such that $\{y,\overline{y}\} \subseteq O[J^*]$,
it holds that $\{y,\overline{y}\} \subseteq F \subseteq M$.
Therefore, $O[J^*] \subseteq M$ holds and, consequently, $O[J^*] \in \AS(P\cup F)$.

So in each of these cases $\AS(P\cup F) \neq \emptyset$ and since
these cases cover all possible $F \subseteq U$, we obtain that $P$ is
super-coherent.

\medskip

Summarizing, we have shown that $\Phi$ being false implies that any
$\Phi$-norm-reduction $P$ is not super-coherent, while $\Phi$ being true
implies that any $\Phi$-norm-reduction is super-coherent, which proves the
lemma.
\end{proof}

We have now to show that for any QBF of the desired form, a 
$\Phi$-norm-reduction can be obtained in polynomial time
(w.r.t.\ the size of $\Phi$).

\begin{definition}\label{def:norm_phi_program}
For any QBF $\Phi = \forall X\exists Y\phi$ with 
$\phi=\bigwedge_{i=1}^n l_{i,1}\lor\cdots\lor l_{i,m_i}$ in CNF, we define
\begin{eqnarray}
\label{eq:nred:first}
N_\Phi & = & 
\{  
x \la \naf\ \overline{x};\; 
\overline{x} \la \naf\ x\; \mid x \in X\}
\cup{} \\
\label{eq:nred:second}
&&
\{  
y \la \naf\ \overline{y}, w;\; 
\overline{y} \la \naf\ y, w;\;
w \la y;\; w \la \overline{y}\;
\mid y\in Y\}
\cup{} \\
\nonumber
&& \{z \la v,w;\;
z \la x, \overline{x};\;
z \la y, \overline{y}\; \mid z\in X\cup \overline{X} \cup \{v, w\},\\
\label{eq:nred:third}
&& \hspace{17.5em} x \in X, y \in Y\}
\cup{}\\
\label{eq:nred:fourth}
&& \{ v  \la \overline{l}_{i,1},\ldots,\overline{l}_{i,m_i} \mid 1\leq i \leq  n\} \cup{} \\
\label{eq:nred:fifth}
&& \{
  w \la \naf\ v\}.
\end{eqnarray}
\end{definition}

The program above contains atoms associated with literals in $\Phi$ and two auxiliary atoms $v,w$.
Intuitively, $v$ is derived by some rule in (\ref{eq:nred:fourth}) whenever some clause of $\phi$ is violated.
Otherwise, if $v$ is not derived, truth of $w$ is inferred by default because of rule (\ref{eq:nred:fifth}).
Moreover, truth values for variables in $X$ are guessed by rules in (\ref{eq:nred:first}), and truth values for variables in $Y$ are guessed by rules in (\ref{eq:nred:second}) provided that $w$ is true.
Finally, rules in (\ref{eq:nred:third}) derive all atoms in $X \cup \overline{X} \cup \{v,w\}$ whenever both $v$ and $w$ are true, or in case an inconsistent assignment for some propositional variable is forced by the addition of a set of facts to the program.
It turns out that any answer set for such a program is such that truth values of atoms in $X \cup \overline{X} \cup Y \cup \overline{Y}$ represent a consistent assignment for propositional variables in $X \cup Y$ satisfying $\phi$.

Again, note that the program from the above definition can be constructed in polynomial
time in the size of the reduced QBF. To conclude the proof, 
it is thus sufficient to show the following relation.

\begin{lemma}
For any QBF $\Phi = \forall X\exists Y \phi$ with $\phi$ in CNF, 
the program $N_\Phi$ is a $\Phi$-norm-reduction.
\end{lemma}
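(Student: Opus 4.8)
The plan is to verify the five properties of Definition~\ref{def:rednormrefined:2} for $N_\Phi$ one by one, using the structure of the program's rules and properties (P1) and (P2) throughout. Property~1 is immediate from inspection of $\At(N_\Phi)$. For property~2 I would argue that every model $M$ of $N_\Phi$ contains either $v$ or $w$ (by rule~(\ref{eq:nred:fifth})); then split on which of $v,w$ belong to $M$. If $v\in M$ but $w\notin M$, rules~(\ref{eq:nred:first}) force a guess $I\cup\overline{(X\setminus I)}$ for $X$, rules~(\ref{eq:nred:second}) force the $Y$-atoms out (no $Y$-atom can be in $M$, since $w\la y$ would then force $w$), and rules~(\ref{eq:nred:third}) stay inactive, yielding $M=M[I]$. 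If $w\in M$ but $v\notin M$, the $Y$-guess via~(\ref{eq:nred:second}) is active; if additionally the $X$-guess together with the $Y$-guess satisfies $\phi$ then no rule in~(\ref{eq:nred:fourth}) fires and $M=N[I,J]$; if some clause is violated, rule~(\ref{eq:nred:fourth}) forces $v\in M$, contradicting $v\notin M$. Finally if $\{v,w\}\subseteq M$ then rules~(\ref{eq:nred:third}) collapse everything to contain $X\cup\overline X\cup\{v,w\}$, and an arbitrary superset $J^*$ of some consistent $Y$-guess is admissible (the $Y$-rules in~(\ref{eq:nred:second}) are satisfied since $w\in M$ and both $y\la y$-type derivations hold), giving exactly the models $O[J^*]$; one must also check no further atoms sneak in and that an inconsistent $X$- or $Y$-guess likewise collapses to some $O[J^*]$.

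For property~3 I would compute the reducts directly. In $N_\Phi^{M[I]}$, since $w\notin M[I]$, all rules~(\ref{eq:nred:second}) with $\naf$ on $y$/$\overline y$ survive but have body atom $w$ which is false, so they are inert; rules~(\ref{eq:nred:fifth}) becomes the fact $w\la$? No---$v\in M[I]$, so $w\la\naf v$ is deleted. Rules~(\ref{eq:nred:first}) become $x\la$ for $x\in I$ and $\overline x\la$ for $x\notin I$ (the complementary rule is deleted), so these are facts pinning down $I\cup\overline{(X\setminus I)}$; nothing derives $v$; hence the models $\subseteq M[I]$ are exactly $M[I]$ and $M[I]\setminus\{v\}=I\cup\overline{(X\setminus I)}$. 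In $N_\Phi^{N[I,J]}$, since $w\in N[I,J]$ and $v\notin N[I,J]$, rule~(\ref{eq:nred:fifth}) gives the fact $w\la$; rules~(\ref{eq:nred:first}),~(\ref{eq:nred:second}) reduce to facts pinning $I,\overline{(X\setminus I)},J,\overline{(Y\setminus J)}$; since $I\cup J\models\phi$ no rule~(\ref{eq:nred:fourth}) fires; so $N[I,J]$ is the unique model $\subseteq N[I,J]$.

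Property~4 is where the reduct $N_\Phi^{O[J^*]}$ must be inspected against conditions 4(a)--4(e). Here $\{v,w\}\subseteq O[J^*]$, so rule~(\ref{eq:nred:fifth}) is deleted (hence $w$ is no longer a fact and can be absent in a submodel), and rules~(\ref{eq:nred:first}),~(\ref{eq:nred:second}) lose their negative-literal guards, leaving $x\la$, $\overline x\la$ (facts, so $X\cup\overline X$ is forced---wait, that would contradict $M$ not containing all of $X$). I would re-examine: rule $x\la\naf\overline x$ reduces to $x\la$ only if $\overline x\notin O[J^*]$, but $\overline x\in O[J^*]$ always, so it is \emph{deleted}; thus the $X$-atoms are unconstrained in the reduct, matching the need for submodels. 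Rules~(\ref{eq:nred:second}): $y\la\naf\overline y,w$ survives iff $\overline y\notin O[J^*]$, becoming $y\la w$---this gives 4(a); symmetrically 4(b); and $w\la y$, $w\la\overline y$ give 4(c). Rules~(\ref{eq:nred:fourth}) survive as $v\la\overline l_{i,1},\ldots$ giving 4(d). Rules~(\ref{eq:nred:third}) survive as $z\la v,w$ and $z\la x,\overline x$ and $z\la y,\overline y$ for all relevant $z$, giving 4(e). The main obstacle I anticipate is the bookkeeping in property~2: carefully ruling out spurious models of $N_\Phi$ (showing any model with ``extra'' atoms must be some $O[J^*]$, and that inconsistent fact-forced assignments collapse correctly), since the normal-program guessing via unstratified negation interacts subtly with the collapse rules~(\ref{eq:nred:third})---this is exactly the complication flagged before the definition, and it requires the same care as in the disjunctive proof's final lemma but with the added wrinkle of (P2).
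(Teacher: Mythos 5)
Your proposal is correct and follows essentially the same route as the paper's proof: a case split on which of $v,w$ a model contains to establish property~2, followed by direct computation of the three kinds of reducts. The one step that needs more than you give it is the assertion that ``nothing derives $v$'' in $N_\Phi^{M[I]}$: the positive rules $v \la \overline{l}_{i,1},\ldots,\overline{l}_{i,m_i}$ from (\ref{eq:nred:fourth}) all survive that reduct, and they are inert only because of the standing assumption that every clause of $\phi$ contains a variable from $Y$, so each such body contains an atom of $Y\cup\overline{Y}$, all false in subsets of $M[I]$; without invoking that assumption, $M[I]\setminus\{v\}$ need not be a model. Otherwise your self-correction on how (\ref{eq:nred:first}) reduces under $O[J^*]$ lands exactly where the paper does.
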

\begin{proof}
We shall first show that $N_\Phi$ has the requested models.
Let $M$ be a model of $N_\Phi$. Let us consider the atoms $v$ and $w$.
Because of the rule $w \la \naf\ v$ in (\ref{eq:nred:fifth}), three cases are possible:
\begin{enumerate}
\item
$\{v,w\} \subseteq M$.
Thus, $X \cup \overline{X} \subseteq M$ holds because of (\ref{eq:nred:third}).
Moreover, there exists $J \subseteq Y$ such that
$J \cup \overline{(Y \setminus J)} \subseteq M$ because of (\ref{eq:nred:second}).
Note that any other atom in $U$ could belong to $M$. These are the models
$O[J^*]$.

\item
$v \in M$ and $w \notin M$. 
Thus, there exists $I \subseteq X$ such that
$I \cup \overline{(X \setminus I)} \subseteq M$ because of (\ref{eq:nred:first}).
Moreover, no atoms in $Y \cup \overline{Y}$ belong to $M$ because of
(\ref{eq:nred:second}) and $w \notin M$ by assumption.
Thus, $M = M[I]$ in this case.

\item
$v \notin M$ and $w \in M$.
Thus, there exist $I \subseteq X$ and $J \subseteq Y$ such that
$I \cup \overline{(X \setminus I)} \subseteq M$ and
$J \cup \overline{(Y \setminus J)} \subseteq M$ because of (\ref{eq:nred:first})
and (\ref{eq:nred:second}).
Hence, in this case $M = N[I,J]$ and, because of (\ref{eq:nred:fourth}),
it holds that $I \cup J \models \phi$.
\end{enumerate}

Let us consider a reduct $P^{M[I]}$ and one of its models $M \subseteq M[I]$.
First of all, note that $P^{M[I]}$ contains a fact for each atom in 
$I \cup \overline{(X \setminus I)}$. 
Thus, $I \cup \overline{(X \setminus I)} \subseteq M$ holds.
Note also that, since each clause of $\phi$ contains at least one variable from $Y$, 
all of the rules of (\ref{eq:nred:fourth}) have at least one false body literal.
Hence, either $M = M[I]$ or $M = M[I] \setminus \{v\}$, as required by
$\Phi$-norm-reductions.

For a reduct $P^{N[I,J]}$ such that $I \cup J \models \phi$ it is enough to
note that $P^{N[I,J]}$ contains a fact for each atom of $N[I,J]$.

Let us consider a reduct $P^{O[J^*]}$ and one of its models $M \subseteq O[J^*]$.
The first observation is that for each $y \in Y$ such that $y \in O[J^*]$
and $\overline{y} \notin O[J^*]$, the reduct $P^{O[J^*]}$ contains a rule
of the form $y \la w$ (obtained by some rule in (\ref{eq:nred:second})).
Similarly, for each $y \in Y$ such that $\overline{y} \in O[J^*]$
and $y \notin O[J^*]$, the reduct $P^{O[J^*]}$ contains a rule
of the form $\overline{y} \la w$ (obtained by some rule in (\ref{eq:nred:second})).
Hence, $M$ must satisfy properties 4(a) and 4(b) of $\Phi$-norm-reductions.
Property 4(c) is a consequence of (\ref{eq:nred:second}),
property 4(d) follows from (\ref{eq:nred:fourth}) and, finally,
property 4(e) must hold because of (\ref{eq:nred:third}).
\end{proof}

Note that the program from Definition~\ref{def:norm_phi_program} does not contain
constraints. As a consequence, the $\PiP{2}$-hardness result presented in this section
also holds if we only consider normal \ASP\ programs without constraints.

\section{Simulating General Answer Set Programs Using Super-Coherent Programs}
\label{sec:embedding}

Since \ASPSC{} programs are a proper subset of \ASP{} programs, a
natural question to ask is whether the restriction to \ASPSC{}
programs limits the range of problems that can be solved.
In this section we show that this is not the case, i.e.,
all problems solvable in \ASP{} can be encoded in \ASPSC{},
and thus benefit from the optimization potential provided by DMS.
Although these results are interesting from a theoretical point of 
view, we do not suggest that they have to be employed in practice.

Often \ASP{}
is associated with the decision problem of whether a program $P$ has
any answer set (the coherence problem), that is testing whether
$\AS(\P) \neq \emptyset$. Of course, the coherence problem becomes
trivial for \ASPSC{} programs. Another type of decision problem
associated with \ASP{} is query answering, and indeed in this section
we show that using query answering it is possible to simulate both
coherence problems and query answering of \ASP{} by query answering
over \ASPSC{} programs. Using the same construction, we show that also
answer set computation problems for \ASP{} programs can be simulated
by appropriate \ASPSC{} programs. While these constructions always
yield disjunctive programs, we also show how to adapt them in order to
yield normal \ASPSC{} programs when starting from normal programs.

We start by assigning to each disjunctive \ASP\ program a corresponding
super-coherent program. In order to simplify the presentation, and without loss of 
generality, in this section we will only consider programs without constraints.

\begin{definition}\label{def:strat}
Let $\p$ be a program the atoms of which belong to a countable set
$\U$. We construct a new program $\strat{\P}$ by using atoms of the following set:
$$\strat{\U} = \U \cup \{\alpha^T \mid \alpha \in \U\} \cup \{\alpha^F \mid \alpha \in \U\} \cup \{\fail\},$$
where $\alpha^T$, $\alpha^F$ and $\fail$ are fresh symbols not belonging to $\U$.
Program $\strat{\P}$ contains the following rules:
\begin{itemize}
\item
for each rule $\R$ of $\P$, a rule $\strat{\R}$ such that 
\begin{itemize}
\item $\head{\strat{\R}} = \head{\R}$ and 
\item $\body{\strat{\R}} = \posbody{\R} \cup \{\alpha^F \mid \alpha \in \negbody{\R}\};$
\end{itemize}

\item
for each atom $\alpha$ in $\U$, three rules of the form
\begin{eqnarray}
\label{eq:rew:1} \alpha^T \vee \alpha^F & \la \\
\label{eq:rew:2} \alpha^T & \la & \alpha \\
\label{eq:rew:3} \fail & \la & \alpha^T,\ \naf~\alpha.
\end{eqnarray}
\end{itemize}
\end{definition}
Intuitively, every rule $\R$ of $\P$ is replaced by a new rule $\strat{\R}$ in
which new atoms of the form $\alpha^F$ replace negative literals of $\R$. Thus,
our translation must guarantee that an atom $\alpha^F$ is true if and only if 
the associated atom $\alpha$ is false. In fact, this is achieved by means of
rules of the form (\ref{eq:rew:1}), (\ref{eq:rew:2}) and (\ref{eq:rew:3}):
\begin{itemize}
\item
(\ref{eq:rew:1}) guarantees that either $\alpha^T$ or $\alpha^F$ belongs to 
every answer set of $\strat{\P}$;

\item
(\ref{eq:rew:2}) assures that $\alpha^T$ belongs to every model of $\strat{\P}$ 
containing atom $\alpha$; and 

\item
(\ref{eq:rew:3}) derives $\fail$ if $\alpha^T$ is true but $\alpha$ is false,
that is, if $\alpha^T$ is only supported by a rule of the form (\ref{eq:rew:1}).
\end{itemize}
It is not difficult to prove that the program $\strat{\P}$ is super-coherent.

\begin{lemma}\label{lem:strat-super}
Let $\P$ be a disjunctive program.
Program $\strat{\P}$ is stratified and thus super-coherent.
\end{lemma}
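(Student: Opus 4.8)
The plan is to show that $\strat{\P}$ is stratified according to the definition given in the preliminaries, and then invoke the already-noted fact that stratified programs are in \ASPSC{}. To do this, I first need to recall that stratification is about the dependency graph: a program (without constraints) is stratified if every cycle of dependencies uses only positive dependencies. So the core of the proof is a careful analysis of the dependency graph of $\strat{\P}$.

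First I would describe the dependencies induced by the three groups of rules. The rules $\strat{\R}$ (for $\R \in \P$) make each head atom $p \in \head{\R}$ depend positively on each atom in $\posbody{\R}$ and positively on each $\alpha^F$ with $\alpha \in \negbody{\R}$; crucially, no atom of the form $\alpha^T$, nor $\fail$, appears in the head of such a rule, and no atom $\alpha^F$ appears in the head of any rule except the disjunctive facts (\ref{eq:rew:1}), which have empty body and hence induce no dependencies. Rule (\ref{eq:rew:2}) makes $\alpha^T$ depend positively on $\alpha$. Rule (\ref{eq:rew:3}) makes $\fail$ depend positively on $\alpha^T$ and negatively on $\alpha$; but $\fail$ occurs in no rule body at all, so it is a sink in the dependency graph.

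Next I would assemble these observations into a stratification argument. I would exhibit an explicit level mapping, for instance: assign level $0$ to every atom $\alpha^F$ (they depend on nothing), level $1$ to every original atom $\alpha \in \U$ (these depend only on original atoms and on $\alpha^F$-atoms, all through positive literals in $\strat{\R}$), level $2$ to every atom $\alpha^T$ (depends positively on $\alpha$, which is at a strictly lower level only if we are slightly careful — actually $\alpha^T \la \alpha$ requires $\mathit{level}(\alpha^T) \geq \mathit{level}(\alpha)$, which holds), and level $3$ to $\fail$. The one subtlety is the negative dependency of $\fail$ on $\alpha$: stratification requires that across a negative edge the level strictly decreases (or, in the even-handed definition used here via "cycles of dependencies involve only positive dependencies," it suffices that $\fail$ lies on no cycle at all). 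Since $\fail$ appears in no body, it participates in no cycle, so any negative edge into it is harmless. Likewise, no $\alpha^F$ or $\alpha^T$ appears on a cycle with a negative edge, because the only negative edges in the whole graph emanate from rule (\ref{eq:rew:3}) and all point into $\fail$. Hence every cycle of dependencies in $\strat{\P}$ consists solely of edges arising from original-atom dependencies within $\strat{\R}$-rules, and all of those are positive; therefore $\strat{\P}$ is stratified.

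I do not expect a serious obstacle here; the proof is essentially a bookkeeping exercise on the dependency graph. The only point requiring mild care is making explicit that the negation-as-failure edge $\fail \la \ldots, \naf\,\alpha$ does not create an odd (indeed any) cycle, which follows immediately from $\fail$ being a sink. Once stratification is established, super-coherence is immediate: stratified programs have a (unique) answer set, and adding any set of facts $F$ to $\strat{\P}$ yields a program that is still stratified (facts introduce no new dependencies), hence still coherent, so $\strat{\P} \in \ASPSC{}$ as claimed.
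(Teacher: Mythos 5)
Your argument is correct and takes essentially the same route as the paper, whose proof is just the one-line observation that the only negative literals in $\strat{\P}$ occur in rules of the form (\ref{eq:rew:3}), whose head $\fail$ appears in no rule body, so no dependency cycle can involve a negative edge. The only quibble is your parenthetical claim that stratified programs have a \emph{unique} answer set, which is false in the disjunctive case (the rules (\ref{eq:rew:1}) alone already admit many answer sets); since only existence of an answer set is needed, this does not affect the argument.
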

\begin{proof}
All negative literals in $\strat{\P}$ are those in rules of the form
(\ref{eq:rew:3}), the head of which is $\fail$, an atom not occurring elsewhere
in $\P$.
\end{proof}

Proving correspondence between answer sets of $\P$ and $\strat{\P}$ is
slightly more difficult. To this aim, we first introduce some
properties of the interpretations of $\strat{\P}$.

\begin{lemma}\label{lem:strat_relation}
Let $I$ be an interpretation for $\strat{\P}$ such that:
\begin{enumerate}
\item for every $\alpha \in \U$, precisely one of $\alpha^T$ and $\alpha^F$ belongs to $I$;
\item for every $\alpha \in \U$, $\alpha \in I$ if and only if $\alpha^T \in I$.
\end{enumerate}
Then,
for every rule $\R$ of $\P$, the following relation is established:
\begin{eqnarray}
\label{eq:strat_relation}
\posbody{\strat{\R}} \subseteq I &
\Longleftrightarrow & 
\posbody{\R} \subseteq (I \cap \U) \mathit{\ and\ } \negbody{\R} \cap (I \cap \U) = \emptyset.
\end{eqnarray}
\end{lemma}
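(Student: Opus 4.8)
The plan is to prove the biconditional \eqref{eq:strat_relation} by unfolding the definition of $\strat{\R}$ and using the two hypotheses on $I$. Recall that $\body{\strat{\R}} = \posbody{\R} \cup \{\alpha^F \mid \alpha \in \negbody{\R}\}$, so $\posbody{\strat{\R}} = \posbody{\R} \cup \{\alpha^F \mid \alpha \in \negbody{\R}\}$ (since $\strat{\R}$ has no negative body literals, its positive body is its whole body). Thus $\posbody{\strat{\R}} \subseteq I$ splits into two conjuncts: $\posbody{\R} \subseteq I$ and $\{\alpha^F \mid \alpha \in \negbody{\R}\} \subseteq I$.

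First I would handle the positive-body part. Since $\posbody{\R} \subseteq \U$, we have $\posbody{\R} \subseteq I$ if and only if $\posbody{\R} \subseteq I \cap \U$; this needs no hypothesis. Next I would handle the negative-body part: I claim $\{\alpha^F \mid \alpha \in \negbody{\R}\} \subseteq I$ if and only if $\negbody{\R} \cap (I \cap \U) = \emptyset$. For each $\alpha \in \negbody{\R}$, hypothesis~1 says exactly one of $\alpha^T, \alpha^F$ is in $I$, so $\alpha^F \in I$ iff $\alpha^T \notin I$; hypothesis~2 says $\alpha^T \in I$ iff $\alpha \in I$, i.e.\ iff $\alpha \in I \cap \U$ (again using $\alpha \in \U$). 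Chaining these, $\alpha^F \in I$ iff $\alpha \notin I \cap \U$. Quantifying over all $\alpha \in \negbody{\R}$ gives $\{\alpha^F \mid \alpha \in \negbody{\R}\} \subseteq I$ iff $\negbody{\R} \cap (I \cap \U) = \emptyset$. Conjoining the two equivalences yields \eqref{eq:strat_relation}.

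There is essentially no hard part here: the statement is a bookkeeping lemma that isolates exactly the interaction between the rewriting of Definition~\ref{def:strat} and the ``well-behaved'' interpretations characterized by the two numbered conditions, so that later proofs can freely pass between $\body{\R}$ being true under $I \cap \U$ and $\body{\strat{\R}}$ being true under $I$. The only point requiring a modicum of care is to note that $\posbody{\R}$ and $\negbody{\R}$ are subsets of $\U$, so that intersecting with $\U$ does not lose anything, and to keep the two hypotheses applied to the correct atoms (hypothesis~1 to rule out the case where both $\alpha^T$ and $\alpha^F$ could be present, hypothesis~2 to tie $\alpha^T$-membership to $\alpha$-membership).
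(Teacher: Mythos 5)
Your proof is correct and follows essentially the same route as the paper's: both combine hypotheses~1 and~2 to get $\alpha^F \in I$ iff $\alpha \notin I$, and then read off the equivalence from the construction of $\strat{\R}$, where $\posbody{\R} = \posbody{\strat{\R}} \cap \U$ and $\negbody{\R} = \{\alpha \mid \alpha^F \in \posbody{\strat{\R}} \setminus \U\}$. Your version merely spells out the two conjuncts more explicitly than the paper does.
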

\begin{proof}
By combining properties of $I$ (item 1 and 2), we have that $\alpha^F \in I$
if and only if $\alpha \not\in I$. The claim thus follows by construction of 
$\strat{\P}$. In fact,  
$\posbody{\R} = \posbody{\strat{\R}} \cap \U$
and
$\negbody{\R} = \{\alpha \mid \alpha^F \in \posbody{\strat{\R}} \setminus \U\}$.
\end{proof}

We are now ready to formalize and prove the correspondence between
answer sets of $\P$ and $\strat{\P}$.

\begin{theorem}\label{thm:strat-equiv}
Let $\P$ be a program and $\strat{\P}$ the program obtained as described in
Definition~\ref{def:strat}. The following relation holds:
$$\AS(\P) = \{M \cap \U \mid M \in \AS(\strat{\P}) \wedge \fail \not\in M\}.$$
\end{theorem}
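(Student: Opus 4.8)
The plan is to establish the two inclusions of the claimed set equality separately, in each case exploiting the structural properties of $\strat{\P}$ that Definition~\ref{def:strat} is designed to enforce. The key technical tool throughout will be Lemma~\ref{lem:strat_relation}, which lets us translate body satisfaction in $\strat{\R}$ back and forth with body satisfaction in $\R$ once we know an interpretation is ``well-behaved'' on the auxiliary atoms $\alpha^T, \alpha^F$; so the first thing I would do is show that every answer set $M$ of $\strat{\P}$ with $\fail \notin M$ satisfies conditions~1 and~2 of that lemma. Condition~1 (exactly one of $\alpha^T,\alpha^F$ in $M$): the rule $\alpha^T \vee \alpha^F \la$ forces at least one; minimality of $M$ as a model of $\strat{\P}^M$, together with the fact that $\alpha^T$ and $\alpha^F$ appear in no other rule head except $\alpha^T \la \alpha$, rules out both. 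Condition~2: rule~(\ref{eq:rew:2}) gives $\alpha \in M \Rightarrow \alpha^T \in M$; the converse $\alpha^T \in M \Rightarrow \alpha \in M$ is exactly what rule~(\ref{eq:rew:3}) enforces together with $\fail \notin M$ (if $\alpha^T \in M$ but $\alpha \notin M$, then $\fail \la \alpha^T$ survives in the reduct and fires).

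For the inclusion $\{M \cap \U \mid M \in \AS(\strat{\P}),\ \fail \notin M\} \subseteq \AS(\P)$: take such an $M$, let $A = M \cap \U$, and show $A \in \AS(\P)$. Using conditions 1--2 and Lemma~\ref{lem:strat_relation}, I would argue that $A$ is a model of $\P^A$ by checking each rule $\R$ of $\P$: if $\R$'s body is satisfied by $A$ (in the reduct sense), then $\posbody{\strat{\R}} \subseteq M$, so the corresponding rule $\strat{\R}$ fires in $\strat{\P}$ and, since $\head{\strat{\R}} = \head{\R}$ and $M$ satisfies it, some head atom is in $M \cap \U = A$. For minimality, suppose $A' \subsetneq A$ were a model of $\P^A$; I would lift $A'$ to an interpretation $M' = A' \cup \{\alpha^T \mid \alpha \in A'\} \cup \{\alpha^F \mid \alpha \notin A'\}$ of $\strat{\U}$ (no $\fail$), verify $M'$ satisfies conditions 1--2 hence Lemma~\ref{lem:strat_relation} applies, and conclude $M'$ is a model of $\strat{\P}^M$ — here one must check the auxiliary rules: (\ref{eq:rew:1}) is satisfied by construction, (\ref{eq:rew:2}) because $\alpha \in A' \Rightarrow \alpha^T \in M'$, and the relevant reducts of (\ref{eq:rew:3}) since $\alpha^T \in M' \Rightarrow \alpha \in A' \Rightarrow \fail$-free. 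Then $M' \subsetneq M$ contradicts minimality of $M$.

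For the reverse inclusion $\AS(\P) \subseteq \{M \cap \U \mid M \in \AS(\strat{\P}),\ \fail \notin M\}$: given $A \in \AS(\P)$, define $M = A \cup \{\alpha^T \mid \alpha \in A\} \cup \{\alpha^F \mid \alpha \in \U \setminus A\}$. One checks $\fail \notin M$ and $M$ satisfies conditions 1--2, so Lemma~\ref{lem:strat_relation} applies; then $M$ is a model of $\strat{\P}$ (the translated rules $\strat{\R}$ hold because $\R$ holds in $\P$ under $A$, and the auxiliary rules hold as above), and one verifies $M \cap \U = A$ since $\alpha^T$ is the only non-auxiliary-free way to get a $\U$-atom and $\alpha^T \in M \iff \alpha \in A$. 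For $M \in \AS(\strat{\P})$ I would show minimality of $M$ over $\strat{\P}^M$: any model $M'' \subseteq M$ of the reduct must contain $\fail$-free copies, must contain exactly one of each $\alpha^T/\alpha^F$ (the disjunctive fact $\alpha^T\vee\alpha^F$ is in the reduct, and the values are pinned down because $M'' \subseteq M$ and the (\ref{eq:rew:3})-reducts present for $\alpha$ with $\alpha^T \in M$ force $\alpha \in M''$), hence $M''$ satisfies conditions 1--2, and then $M'' \cap \U$ is a model of $\P^A$ contained in $A$, so by minimality of $A$ it equals $A$, whence $M'' = M$.

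The main obstacle I anticipate is the minimality arguments in both directions, and specifically the bookkeeping around the auxiliary atoms in the reduct: one must be careful that $\strat{\P}^M$ retains exactly the rules $\alpha^T \la \alpha$ and $\fail \la \alpha^T$ for those $\alpha$ with $\alpha \notin M$ (and that $\fail \la \alpha^T, \naf\,\alpha$ is deleted when $\alpha \in M$), so that a purported smaller model cannot cheat by setting $\alpha^T$ true while $\alpha$ false without incurring $\fail$. Relatedly, one must confirm that the lift $A \mapsto M$ and its minimal-model counterpart genuinely respect the reduct of the translated rules — here the subtlety is that $\strat{\R}$ has no negative literals at all (negation was pushed into $\alpha^F$), so $\strat{\P}^M$ differs from $\strat{\P}$ only in the (\ref{eq:rew:3})-rules, which simplifies the analysis considerably and is the reason the construction works. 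Everything else is routine once Lemma~\ref{lem:strat_relation} is in hand.
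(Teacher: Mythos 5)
Your overall strategy is the same as the paper's: establish conditions 1--2 of Lemma~\ref{lem:strat_relation} for the relevant interpretations and then argue minimality in each direction. However, the minimality argument for the inclusion $\{M\cap\U \mid M\in\AS(\strat{\P}),\ \fail\notin M\}\subseteq\AS(\P)$ has a genuine error. Given $A'\subsetneq A=M\cap\U$ modeling $\P^A$, you lift it to $M'=A'\cup\{\alpha^T\mid\alpha\in A'\}\cup\{\alpha^F\mid\alpha\notin A'\}$ and claim $M'\subsetneq M$. This is false: for any $\alpha\in A\setminus A'$ (nonempty by assumption) you place $\alpha^F$ in $M'$, whereas $M$ contains $\alpha^T$ and, by condition 1, \emph{not} $\alpha^F$. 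Hence $M'\not\subseteq M$ and no contradiction with the minimality of $M$ over $\strat{\P}^M$ is obtained. Worse, $M'$ need not even model $\strat{\P}^M$: by making $\alpha^F$ true for $\alpha\in A\setminus A'$ you may activate the body of some $\strat{\R}$ whose original $\R$ has $\negbody{\R}\cap A\neq\emptyset$ and therefore does not survive into $\P^A$, so the hypothesis on $A'$ says nothing about its head. The repair (which is what the paper does) is to leave the auxiliary atoms of $M$ untouched and delete only $\U$-atoms: set $\Delta=A\setminus A'$ and show that $M\setminus\Delta$ models $\strat{\P}^M$; any $\strat{\R}$ with $\posbody{\strat{\R}}\subseteq M\setminus\Delta\subseteq M$ then corresponds, via Lemma~\ref{lem:strat_relation} applied to $M$, to a rule of $\P^A$ whose positive body lies in $A'$, and the assumption on $A'$ finishes the step.

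There is also a smaller slip in the converse direction: you assert that the reducts of rule~(\ref{eq:rew:3}) ``present for $\alpha$ with $\alpha^T\in M$'' force $\alpha\in M''$. It is the opposite: $\fail\la\alpha^T,\ \naf~\alpha$ survives into $\strat{\P}^M$ precisely when $\alpha\notin M$ and is deleted when $\alpha\in M$. Consequently a submodel $M''\subseteq M$ of the reduct need not satisfy condition 2 of Lemma~\ref{lem:strat_relation} (it may contain $\alpha^T$ without $\alpha$), so you cannot invoke the lemma for $M''$. This part is repairable without the lemma: from $M''\setminus\U=M\setminus\U$ one gets, for any $\R'\in\P^A$ with $\posbody{\R'}\subseteq M''\cap\U$, that $\posbody{\strat{\R}}\subseteq M''$, whence $\head{\R}\cap(M''\cap\U)\neq\emptyset$; thus $M''\cap\U$ models $\P^A$ directly and minimality of $A$ concludes the argument as you intend.
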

\begin{proof}
$(\supseteq)$
Let $M$ be an answer set of $\strat{\P}$ such that $\fail \not\in M$.
We shall show that $M \cap \U$ is an answer set of $\P$.

We start by observing that $M$ has the properties required by 
Lemma~\ref{lem:strat_relation}:
\begin{itemize}
\item
The first property is guaranteed by rules of the form (\ref{eq:rew:1}) and 
because atoms of the form $\alpha^F$ occur as head atoms only in these rules;

\item
the second property is ensured by rules of the form (\ref{eq:rew:2}) and 
(\ref{eq:rew:3}), combined with the assumption $\fail \not\in M$.
\end{itemize}
Therefore, relation (\ref{eq:strat_relation}) holds for $M$,
which combined with the assumption that $M$ is a model of $\strat{\P}$, 
implies that $M \cap \U$ is a model of $\P$. We next show that $M \cap \U$ is
also a minimal model of the reduct $\P^{(M \cap \U)}$.

Let $(M \cap \U) \setminus \Delta$ be a model of $\P^{(M \cap \U)}$, for some set
$\Delta \subseteq \U$. We next prove that $M \setminus \Delta$ is a model of $\strat{\P}^M$, which implies that $\Delta = \emptyset$ since $M$ is an answer set of $\strat{\P}$.
All of the rules of $\strat{\P}^M$ obtained from
(\ref{eq:rew:1}), (\ref{eq:rew:2}) and (\ref{eq:rew:3}) are satisfied by $M \setminus \Delta$: Rules of $\strat{\P}^M$ obtained from
(\ref{eq:rew:1}), (\ref{eq:rew:2}) remain equal and since $\Delta \subseteq \U$  and since their heads are not in $\U$, satisfaction by $M$ implies satisfaction by $M \setminus \Delta$. Rules of $\strat{\P}^M$ obtained from
(\ref{eq:rew:3}) are such that $\alpha \not\in M$, in addition since $\fail \not\in M$ also $\fail \not\in M \setminus \Delta$, and since the rule is satisfied by $M$, $\alpha^T \not\in M$ and thus also $\alpha^T \not\in M \setminus \Delta$.
Every remaining rule $\strat{\R} \in \strat{\P}$ is such that $\R \in \P$.
If $\posbody{\strat{\R}} \subseteq M \setminus \Delta \subseteq M$, we can apply
(\ref{eq:strat_relation}) and conclude $\negbody{\R} \cap (M \cap \U) = \emptyset$, that
is, a rule obtained from $\R$ by removing negative literals belongs to $\P^{(M \cap \U)}$.
Moreover, we can conclude that
$\posbody{\R} = \posbody{\strat{\R}} \cap \U \subseteq (M \cap \U) \setminus \Delta$,
and thus $\emptyset \neq \head{\R} \cap (M \cap \U) \setminus \Delta 
= \head{\strat{\R}} \cap M \setminus \Delta$.

\medskip

\noindent
$(\subseteq)$
Let $M$ be an answer set of $\P$.
We shall show that 
$$M' = M \cup \{\alpha^T \mid \alpha \in M\} \cup 
\{\alpha^F \mid \alpha \in \U \setminus M\}$$
is an answer set of $\strat{\P}$.

We first observe that relation (\ref{eq:strat_relation}) holds for $M'$,
which combined with the assumption that $M$ is a model of $\P$ implies that $M'$
is a model of $\strat{\P}$. We can
then show that $M'$ is also a minimal model for the reduct $\strat{\P}^{M'}$.
In fact, we can show that every $N \subseteq M'$ modeling $\strat{\P}^{M'}$
is such that $N = M'$ in two steps:
\begin{enumerate}
\item
$N \setminus \U = M' \setminus \U$. This follows by rules of the form 
(\ref{eq:rew:1}) and by construction of $M'$.
In fact, rules of the form (\ref{eq:rew:1}) belongs to $\strat{\P}^{M'}$, of which $N$ is a model by assumption.
For each $\alpha \in \U$, these rules enforce the presence of at least one of $\alpha^T$ and $\alpha^F$ in $N$.
By construction, $M'$ contains exactly one of $\alpha^T$ or $\alpha^F$ for each $\alpha \in \U$, and we thus conclude $N \setminus \U = M' \setminus \U$.

\item
$N \cap \U = M' \cap \U$.
Note that $M' \cap \U = M$. Moreover, from the assumption $N \subseteq M'$,
we have $N \cap \U \subseteq M' \cap \U = M$. Thus, it is enough to show that 
$N \cap \U$ is a model of $\P^M$ because in this case $N \cap \U = M$
would be a consequence of the assumption $M \in \AS(\P)$.

In order to show that $N \cap \U$ is a model of $\P^M$, let us consider a rule $\R'$
of $\P^M$ with $\posbody{\R'} \subseteq N \cap \U$. Such a rule has been obtained 
from a rule $\R$ of $\P$ such that $\negbody{\R} \cap M = \emptyset$.
Let us consider $\strat{\R} \in \strat{\P}$, recall that $\strat{\R}$ has been obtained from $\R$ by replacing negatively occurring atoms $\alpha$ by $\alpha^F$.
Clearly, $\strat{\R} \in \strat{\P}^{M'}$ because $\negbody{\strat{\R}} = \emptyset$
by construction.
Moreover, since $\posbody{\R'} \subseteq N \cap \U$, we get
$\posbody{\R} \subseteq N \cap \U$ (recall that $\R'$ is the reduct of
$\R$ with respect to $M$, thus $\posbody{\R}=\posbody{\R'}$) and
consequently $\posbody{\strat{\R}} \cap \U \subseteq N \cap \U$ (since
$\posbody{\strat{\R}} \cap \U = \posbody{\R}$). Furthermore,
$\posbody{\strat{\R}} \setminus \U = \{\alpha^F \mid \alpha \in
\negbody{\R}\}$, and since $\negbody{\R}\cap M =\emptyset$ we get
$\posbody{\strat{\R}} \setminus \U \subseteq M' \setminus \U$ (since
$\alpha^F \in M'$ if and only if $\alpha \not\in M$); given $N
\setminus \U = M' \setminus \U$ we also have $\posbody{\strat{\R}}
\setminus \U \subseteq N \setminus \U$. 
In total, we obtain $\posbody{\strat{\R}} \subseteq N$.
Since by assumption $N$ is a model of $\strat{\P}^{M'}$, $\head{\strat{\R}} \cap N \neq \emptyset$ and we can conclude that
$\head{\R'} \cap (N \cap \U) \neq \emptyset$, that is, rule $\R'$ is satisfied
by $N \cap \U$.
\end{enumerate}
Summarizing, no $N \subsetneq M'$ is a model of $\strat{\P}^{M'}$, and hence $M'$ is a minimal model of $\strat{\P}^{M'}$ and thus an answer set of $\strat{\P}$.
\end{proof}

Let us now consider normal programs. Our aim is to define a translation for
associating every normal program with a super-coherent normal program.
Definition~\ref{def:strat} alone is not suitable for this purpose, as rules of
the form (\ref{eq:rew:1}) are disjunctive. However, it is not difficult to prove
that the application of Definition~\ref{def:strat} to normal programs yields
head-cycle free programs.

\begin{lemma}\label{lem:normal-hcf}
If $\P$ is a normal program, $\strat{\P}$ is head-cycle free.
\end{lemma}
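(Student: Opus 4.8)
The plan is to analyze the dependency graph of $\strat{\P}$ and check the head-cycle free condition directly, exploiting the very simple structure of the rules introduced by Definition~\ref{def:strat}. Recall that a program is head-cycle free if, within any single rule, no two distinct head atoms positively depend on one another (through a cycle of positive dependencies in the whole program). So the first step is to identify which rules of $\strat{\P}$ even have more than one head atom: since $\P$ is normal, every rule $\strat{\R}$ coming from a rule $\R\in\P$ has $|\head{\strat{\R}}| = |\head{\R}| \le 1$, and among the auxiliary rules (\ref{eq:rew:1})--(\ref{eq:rew:3}), only rules of the form (\ref{eq:rew:1}), namely $\alpha^T \vee \alpha^F \la$, are disjunctive. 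Hence the only pairs of distinct head atoms that could possibly violate head-cycle freeness are $\alpha^T$ and $\alpha^F$ for some $\alpha\in\U$.

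Next I would show that $\alpha^T$ and $\alpha^F$ do not positively depend on each other. For this it suffices to observe that $\alpha^F$ occurs in the head of exactly one rule, the fact $\alpha^T \vee \alpha^F \la$, whose body is empty; therefore $\alpha^F$ does not positively depend on any atom at all, and in particular not on $\alpha^T$. Consequently there can be no cycle of positive dependencies linking $\alpha^T$ and $\alpha^F$, so the head-cycle free condition is vacuously satisfied for the rules (\ref{eq:rew:1}). Combined with the observation from the previous paragraph that all other rules of $\strat{\P}$ have at most one head atom, this establishes that $\strat{\P}$ is head-cycle free.

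I do not expect any real obstacle here: the statement is essentially immediate once one notes that the only disjunctive rules are the binary facts (\ref{eq:rew:1}) and that $\alpha^F$ is a ``sink'' in the positive dependency graph (it is derived only by a fact). The only point requiring a little care is making precise that positive dependency is to be read over the whole program $\strat{\P}$ (not just within the rule), and confirming that no rule of $\strat{\P}$ has $\alpha^F$ in a positive body position feeding back into $\alpha^T$ --- indeed $\alpha^F$ appears in positive bodies only in the translated rules $\strat{\R}$, whose heads are the original atoms of $\U$ (never atoms of the form $\beta^T$), so no return path to $\alpha^T$ exists. This observation can be folded into the short argument above, and the proof is complete.
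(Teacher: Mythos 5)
Your proof is correct and follows essentially the same route as the paper's: since $\P$ is normal, the only disjunctive rules of $\strat{\P}$ are those of form (\ref{eq:rew:1}), and $\alpha^F$ occurs in no other rule head (that rule being a fact), so $\alpha^T$ and $\alpha^F$ cannot positively depend mutually on each other. Your version just spells out the dependency-graph reasoning in slightly more detail than the paper does.
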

\begin{proof}
Since $\P$ is normal, all disjunctive rules in $\strat{P}$ are of
the form (\ref{eq:rew:1}). Atoms $\alpha^T$ and $\alpha^F$ are not involved
in any cycle of dependencies because $\alpha^F$ do not appear in 
any other rule heads.
\end{proof}

Hence, for a normal program $\P$, we construct a head-cycle free
program $\strat{\P}$. It is well known in the literature that a head-cycle free
program $\P$ can be associated to a \emph{uniformly equivalent} normal program 
$\P^\rightarrow$, meaning that $\P \cup F$ and $\P^\rightarrow \cup F$ are
equivalent, for each set of facts $F$. Below, we recall this result.

\begin{definition}[Definition 5.11 of \citeNP{Eiter07}]\label{def:shift}
Let $\P$ be a disjunctive program. We construct a new program $\P^\rightarrow$
as follows:
\begin{itemize}
 \item
 all the rules $\R \in \P$ with $\head{\R} = \emptyset$ belong to $\P^\rightarrow$;

 \item
 for each rule $\R \in \P$ with $\head{\R} \neq \emptyset$, and for each atom $\alpha \in \head{\R}$,
 program $\P^\rightarrow$ contains a rule $\R^\rightarrow$ such that
 $\head{\R^\rightarrow} = \{\alpha\}$, $\posbody{\R^\rightarrow} = \posbody{\R}$
 and $\negbody{\R^\rightarrow} = \negbody{\R} \cup (\head{\R} \setminus \{\alpha\})$.
\end{itemize}
\end{definition}
\begin{theorem}[Adapted from Theorem 5.12 of \citeNP{Eiter07}]\label{thm:shift-uniform}
For any head-cycle free program $\P$, and any set of atoms $F$, it holds that
$\AS(\P \cup F) = \AS(\P^\rightarrow \cup F)$.
\end{theorem}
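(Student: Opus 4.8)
The plan is to reduce the claim to the classical statement that the shift preserves answer sets for head-cycle free programs, i.e.\ to the case $F=\emptyset$, which is precisely the result cited as Theorem~5.12 of \citeNP{Eiter07} (essentially the correspondence for head-cycle free programs, cf.\ \cite{bene-dech-94}): $\AS(\P)=\AS(\P^\rightarrow)$ whenever $\P$ is head-cycle free. To obtain the version with an arbitrary set of atoms $F$, I would establish two small facts about how head-cycle freeness and the construction $(\cdot)^\rightarrow$ interact with the addition of facts, and then instantiate the classical result on the program $\P\cup F$.

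First I would observe that head-cycle freeness is preserved under the addition of facts: if $\P$ is head-cycle free, then so is $\P\cup F$ for every set of atoms $F$. This is immediate from the definition, since each element of $F$ is a fact, hence a rule with empty body, and therefore introduces no dependencies at all; the dependency graph of $\P\cup F$ thus has exactly the same edges as that of $\P$ (possibly with some extra isolated vertices for new atoms of $F$), so in particular no two distinct atoms occurring together in the head of a rule positively depend on each other in $\P\cup F$. Second I would observe that the shift commutes with the addition of facts, namely $(\P\cup F)^\rightarrow=\P^\rightarrow\cup F$. Indeed, by Definition~\ref{def:shift} the construction acts rule by rule, so $(\P\cup F)^\rightarrow$ is the union of the shifted rules of $\P$, which is $\P^\rightarrow$, and the shifted facts of $F$; and for a fact $\alpha\la$ the head is the singleton $\{\alpha\}$, so the shift produces exactly one rule, with head $\{\alpha\}$, empty positive body, and negative body $\negbody{(\alpha\la)}\cup(\{\alpha\}\setminus\{\alpha\})=\emptyset$, i.e.\ the fact $\alpha\la$ itself.

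Combining the two observations with the head-cycle free case of the theorem applied to the (head-cycle free) program $\P\cup F$ then yields
\[
\AS(\P\cup F)=\AS\big((\P\cup F)^\rightarrow\big)=\AS(\P^\rightarrow\cup F),
\]
which is the claim. The only non-routine ingredient here is the head-cycle free base case itself, which we take over from \citeNP{Eiter07}; were one to reprove it, the main obstacle would be the ``minimality'' side showing that every answer set of $\P$ survives the shift (the inclusion $\AS(\P^\rightarrow)\subseteq\AS(\P)$ being easy), and the standard way to handle it is via the level-mapping characterization of answer sets of head-cycle free programs: each atom of an answer set must be derivable through a rule whose body is satisfied, whose head meets the answer set in exactly that atom, and whose positive body atoms have strictly smaller level, which is exactly the condition made syntactically explicit by $(\cdot)^\rightarrow$.
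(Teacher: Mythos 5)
Your argument is correct. Note first that the paper itself offers no proof of this statement at all: it is presented as an adaptation of Theorem~5.12 of Eiter et al., i.e.\ the uniform-equivalence version of the shifting result is simply imported from the literature. What you do instead is derive the uniform-equivalence version from the more classical ordinary-equivalence version ($\AS(Q)=\AS(Q^\rightarrow)$ for head-cycle free $Q$, going back to Ben-Eliyahu and Dechter), via the two closure observations: head-cycle freeness is preserved under adding facts (facts contribute no dependency edges), and the shift fixes facts pointwise, so $(\P\cup F)^\rightarrow=\P^\rightarrow\cup F$. Both observations are immediate from Definition~\ref{def:shift}, and instantiating the base case at the head-cycle free program $\P\cup F$ then gives $\AS(\P\cup F)=\AS((\P\cup F)^\rightarrow)=\AS(\P^\rightarrow\cup F)$ as claimed. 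This is a genuinely valid move even though ordinary equivalence does not in general imply uniform equivalence: you are not transferring equivalence of a fixed pair of programs to all extensions, but re-applying a universally quantified theorem to each extended program, which is legitimate precisely because the class of head-cycle free programs is closed under adding facts and the transformation commutes with that addition. Your aside about which inclusion is the hard one in the base case ($\AS(Q^\rightarrow)\subseteq\AS(Q)$ holding for arbitrary disjunctive programs, the converse needing head-cycle freeness and a level-mapping argument) is also accurate. In short: the paper buys the statement wholesale by citation; you buy it from a weaker, better-known ingredient plus two one-line lemmas, which is arguably the more self-contained route.
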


Thus, by combining Definitions~\ref{def:strat} and \ref{def:shift}, we can associate
every normal program with a super-coherent normal program.

\begin{theorem}
For a program $\P$, let $\strat{\P}^\rightarrow$ be the program obtained by 
applying the transformation in Definition~\ref{def:shift} to the program $\strat{\P}$.
If $\P$ is a normal program, $\strat{\P}^\rightarrow$ is super-coherent and such that:
$$\AS(\P) = \{M \cap \U \mid M \in \AS(\strat{\P}^\rightarrow) \wedge \fail \not\in M\}.$$
\end{theorem}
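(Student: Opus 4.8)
The plan is to derive this theorem by chaining together the results already established for the two transformations in sequence, keeping careful track of which hypotheses are needed at each step.

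First I would observe that if $\P$ is a normal program, then Lemma~\ref{lem:normal-hcf} guarantees that $\strat{\P}$ is head-cycle free. This is exactly the hypothesis required to apply Theorem~\ref{thm:shift-uniform} with the program $\strat{\P}$ in place of $\P$: for every set of facts $F$ we obtain $\AS(\strat{\P} \cup F) = \AS(\strat{\P}^\rightarrow \cup F)$. Taking $F = \emptyset$ in particular yields $\AS(\strat{\P}) = \AS(\strat{\P}^\rightarrow)$, and combining this equality with Theorem~\ref{thm:strat-equiv} immediately gives the claimed identity $\AS(\P) = \{M \cap \U \mid M \in \AS(\strat{\P}^\rightarrow) \wedge \fail \not\in M\}$.

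For super-coherence of $\strat{\P}^\rightarrow$, I would proceed directly from the definition: I need $\AS(\strat{\P}^\rightarrow \cup F) \neq \emptyset$ for every set of facts $F$. By the uniform-equivalence statement of Theorem~\ref{thm:shift-uniform}, $\AS(\strat{\P}^\rightarrow \cup F) = \AS(\strat{\P} \cup F)$ for every such $F$, so it suffices to know that $\strat{\P}$ is super-coherent. But that is precisely Lemma~\ref{lem:strat-super}, which tells us $\strat{\P}$ is stratified and hence super-coherent (a stratified program admits an answer set, and adding facts preserves stratification). Hence $\strat{\P}^\rightarrow$ is super-coherent as well. One should note that the shifting transformation of Definition~\ref{def:shift} may itself destroy stratification --- rules $\R^\rightarrow$ introduce new negative body literals coming from the other head atoms --- so super-coherence of $\strat{\P}^\rightarrow$ genuinely relies on the uniform-equivalence detour rather than on a structural syntactic argument about $\strat{\P}^\rightarrow$ itself.

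The proof is essentially a bookkeeping argument, so I do not expect a serious obstacle; the one subtlety worth stating explicitly is the applicability of Theorem~\ref{thm:shift-uniform}, which is the reason the hypothesis ``$\P$ is normal'' is indispensable: for a general disjunctive $\P$, the program $\strat{\P}$ need not be head-cycle free, so the shift would not preserve the answer sets and the argument would break down. For normal $\P$, Lemma~\ref{lem:normal-hcf} supplies exactly this missing ingredient, and everything else follows by substitution into the already-proven statements.
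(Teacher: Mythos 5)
Your proposal is correct and follows essentially the same route as the paper: Lemma~\ref{lem:normal-hcf} plus Theorem~\ref{thm:shift-uniform} give $\AS(\strat{\P}\cup F)=\AS(\strat{\P}^\rightarrow\cup F)$ for all $F$, super-coherence then transfers from $\strat{\P}$ via Lemma~\ref{lem:strat-super}, and the answer-set correspondence follows from Theorem~\ref{thm:strat-equiv}. Your added remark that shifting can destroy stratification, so the detour through uniform equivalence is genuinely needed, is a correct and worthwhile observation that the paper leaves implicit.
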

\begin{proof}
By Lemma~\ref{lem:normal-hcf} and
Theorem~\ref{thm:shift-uniform},
$\AS(\strat{\P} \cup F) = \AS(\strat{\P}^\rightarrow \cup F)$, for any set $F$ of facts. 
Moreover,
with Lemma~\ref{lem:strat-super},
we obtain that 
$\strat{\P}^\rightarrow$ 
is super-coherent.
Finally, by using Theorem~\ref{thm:strat-equiv}, 
$\AS(\P) = \{M \cap \U \mid M \in \AS(\strat{\P}^\rightarrow) \wedge \fail \not\in M\}$
holds.
%
%
\end{proof}

Note that the program $\strat{\P}^\rightarrow$ can be obtained from $\P$ by
applying the rewriting introduced by Definition~\ref{def:strat}, in which rules
of the form (\ref{eq:rew:1}) are replaced by rules of the following form:
\begin{equation}\label{eq:rew:1shift}
\alpha^T \la \naf~\alpha^F \qquad
\alpha^F \la \naf~\alpha^T.  
\end{equation}
We also note that it is possible to use rules (\ref{eq:rew:1shift})
instead of rules (\ref{eq:rew:1}) for the general, disjunctive case,
such that Theorem~\ref{thm:strat-equiv} would still hold. However, the
proof is somewhat more involved.

We can now relate the coherence and query answering problems for
\ASP{} to corresponding query answering problems for \ASPSC{}, and
thus conclude that all of the problems solvable in this way using
\ASP{} are also solvable using \ASPSC{}.

\begin{theorem}
Given a program $\P$ over $\U$,
\begin{enumerate}
\item $\AS(\P) = \emptyset$ if and only if $\fail$ is cautiously true for $\strat{\P}$;
\item a query $\q$ is bravely true for $\P$ if and only if $\qnew$ is bravely true for the \ASPSC{} program $\strat{\P} \cup \{ \qnew \la \q, \naf~\fail \}$;
\item a query $\q$ is cautiously true for $\P$ if and only if $\qnew$ is cautiously true for the \ASPSC{} program $\strat{\P} \cup \{ \qnew \la \q ; \qnew \la \fail \}$,
\end{enumerate}
where $\qnew$ is a fresh atom, which does not occur in $\P$ and
$\strat{\P}$.
\end{theorem}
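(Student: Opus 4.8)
The plan is to reduce all three claims to Theorem~\ref{thm:strat-equiv}, which bijectively matches $\AS(\P)$ with $\{M\in\AS(\strat{\P})\mid\fail\notin M\}$ via the map $M\mapsto M\cap\U$, supplemented by the splitting set theorem~\cite{lifs-turn-94} to handle the extra rules introduced in items~2 and~3. Before the case analysis I would record two preliminary facts. First, $\strat{\P}$ is stratified (Lemma~\ref{lem:strat-super}) and hence coherent, so $\AS(\strat{\P})\neq\emptyset$ and cautious-truth statements about $\strat{\P}$ are never vacuous. Second, each of the two augmented programs $\strat{\P}\cup\{\qnew\la\q,\naf~\fail\}$ and $\strat{\P}\cup\{\qnew\la\q;\ \qnew\la\fail\}$ is obtained from $\strat{\P}$ by adding rules whose head consists solely of the fresh atom $\qnew$, on which no atom of $\strat{\P}$ depends; hence these programs remain stratified, so they indeed belong to \ASPSC{}, as asserted in the statement.

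Item~1 is then immediate: $\fail$ is cautiously true for $\strat{\P}$ iff no answer set of $\strat{\P}$ omits $\fail$, which by Theorem~\ref{thm:strat-equiv} holds iff $\AS(\P)=\emptyset$.

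For items~2 and~3 I would use $\At(\strat{\P})$ as a splitting set of the augmented program $Q$, with bottom part $\strat{\P}$ and top part the newly added rules (whose heads lie outside $\At(\strat{\P})$). The splitting set theorem then yields that the answer sets of $Q$ are exactly the sets $A\cup T$ with $A\in\AS(\strat{\P})$ and $T\in\{\emptyset,\{\qnew\}\}$, where $T=\{\qnew\}$ precisely when partial evaluation of the top rules against $A$ derives $\qnew$: for item~2 this happens iff $\q\in A$ and $\fail\notin A$, and for item~3 iff $\q\in A$ or $\fail\in A$. In the brave case, $\qnew$ is bravely true for $Q$ iff some $A\in\AS(\strat{\P})$ has $\q\in A$ and $\fail\notin A$; since $\q\in\U$, Theorem~\ref{thm:strat-equiv} rewrites this as the existence of some $M\in\AS(\P)$ with $\q\in M$, i.e.\ as $\q$ being bravely true for $\P$. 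In the cautious case, $\qnew$ is cautiously true for $Q$ iff every $A\in\AS(\strat{\P})$ has $\q\in A$ or $\fail\in A$. If $\q$ is cautiously true for $\P$, then an $A$ with $\fail\in A$ satisfies this trivially, while an $A$ with $\fail\notin A$ restricts to $A\cap\U\in\AS(\P)$, which contains $\q$, hence so does $A$; conversely, any $M\in\AS(\P)$ lifts to some $A\in\AS(\strat{\P})$ with $\fail\notin A$ and $M=A\cap\U$, forcing $\q\in A$ and thus $\q\in M$.

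The routine verifications are light; the part that deserves attention is the interplay between the two guards ($\naf~\fail$ in item~2, and the auxiliary rule $\qnew\la\fail$ in item~3) and the ``spurious'' answer sets of $\strat{\P}$, namely those containing $\fail$, which have no counterpart in $\AS(\P)$. In the brave reduction the literal $\naf~\fail$ is precisely what prevents such an answer set from producing a witness for $\qnew$ when $\q$ is not brave for $\P$; in the cautious reduction the rule $\qnew\la\fail$ instead forces every such answer set to already contain $\qnew$, so it cannot invalidate cautious truth. Keeping track of which direction of Theorem~\ref{thm:strat-equiv} is being invoked when passing between $\AS(\P)$ and $\AS(\strat{\P})$ is the only genuine subtlety.
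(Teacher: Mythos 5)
Your proposal is correct and follows essentially the same route as the paper: all three items are reduced to Theorem~\ref{thm:strat-equiv}, with the fresh atom $\qnew$ handled by observing that the added rules define it in terms of a lower layer. The only difference is cosmetic—you invoke the splitting set theorem explicitly to characterize the answer sets of the augmented programs, whereas the paper argues the correspondence $M'\mapsto M'\cup\{\qnew\}$ and $N\mapsto N\setminus\{\qnew\}$ directly and somewhat informally; your version arguably makes that step more rigorous.
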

\begin{proof}
We first observe that programs $\strat{\P} \cup \{ \qnew \la \q, \naf~\fail \}$ and $\strat{\P} \cup \{ \qnew \la \q ; \qnew \la \fail \}$ are \ASPSC{}. In fact, rules $\qnew \la \q, \naf~\fail$ and $\{ \qnew \la \q ; \qnew \la \fail \}$ do not introduce cycles in these programs as $\qnew$ does not occur in $\P$ and $\strat{\P}$. We now prove the other statements of the theorem.

(1) If $\AS(\P) = \emptyset$, by Theorem~\ref{thm:strat-equiv} either
  $\AS(\strat{\P})=\emptyset$ (this will not occur because
  $\strat{\P}$ is super-coherent) or $\fail \in M$ for all $M \in
  \AS(\strat{\P})$. In either case $\fail$ is cautiously true for
  $\strat{\P}$. If $\fail$ is cautiously true for $\strat{\P}$, then
  $\fail \in M$ for all $M \in \AS(\strat{\P})$, hence by
  Theorem~\ref{thm:strat-equiv}, $\AS(\P) = \emptyset$.

(2) Let $\P^+$ denote $\strat{\P} \cup \{ \qnew \la \q, \naf~\fail
  \}$. If $\q$ is bravely true for $\P$, there is $M \in \AS(\P)$ such
  that $\q \in M$, and by Theorem~\ref{thm:strat-equiv} there is an
  $M' \in \AS(\strat{\P})$ such that $\fail \not \in M'$ and $M =
  M'\cap \U$, and hence $\q \in M'$. Therefore $M' \cup \{\qnew \} \in
  \AS(\P^+)$ and thus $\qnew$ is bravely true for $\P^+$. If $\qnew$
  is bravely true for $\P^+$, then there exists one $N \in \AS(\P^+)$
  such that $\qnew \in N$. Then $N^- = N \setminus \{\qnew\}$ is in
  $\AS(\strat{\P})$ and $\fail$ is not in $N$ and $N^-$, while $\q$ is
  in both $N$ and $N^-$. Therefore, by Theorem~\ref{thm:strat-equiv},
  $N'= N\cap \U$ is in $\AS(\P)$ and $\q \in N'$, hence $\q$ is
  bravely true for $\P$.

(3) Let $\P^+$ denote $\strat{\P} \cup \{ \qnew \la \q ; \qnew \la
  \fail \}$. If $\q$ is cautiously true for $\P$, then $\q \in M$ for
  all $M \in \AS(\P)$. By Theorem~\ref{thm:strat-equiv}, each $M' \in
  \AS(\strat{\P})$ either is of the form $M = M'\cap \U$ for some $M
  \in \AS(\P)$ and thus $\q \in M'$ or $\fail \in M'$. In either case
  we get $M' \cup \{\qnew\} \in \AS(\P^+)$ and hence that $\qnew$ is
  cautiously true for $\P^+$. If $\qnew$ is cautiously true for
  $\P^+$, then $\qnew \in N$ for each $N\in \AS(P^+)$. Each of these
  $N$ contains either (a) $\fail$ or (b) $\q$, and $\AS(\strat{P}) =
  \{N \setminus \{\qnew\} \mid N \in \AS(P^+)\}$. By
  Theorem~\ref{thm:strat-equiv}, each $N^-\in \AS(\P)$ is of the form
  $N^-=N\cap \U$ for those $N \in \AS(P^+)$ which do not contain
  $\fail$, hence are of type (b) and contain $\q$. Therefore $\q$ is
  in all $N^-\in \AS(\P)$ and is therefore cautiously true for $\P$.
\end{proof}

\section{Some Implications}\label{sec:impl}

\citeText{Oetsch07b} studied the following problem under 
the name ``uniform equivalence with projection:''
\begin{itemize}
\item
Given two programs $P$ and $Q$, and two sets $A,B$ of atoms, 
$P\equiv^A_B Q$ if and only if for each set $F\subseteq A$ of facts,
$\{ I\cap B \mid I \in \AS(P\cup F)\} =
\{ I\cap B \mid I \in \AS(Q\cup F)\}$. 
\end{itemize}
Let us call $A$ the context alphabet and $B$ the projection alphabet.
As is easily verified the following 
relation holds.

\begin{proposition}\label{prop:relation}
A program $P$ over atoms $U$ is super-coherent if and only if 
$P\equiv^U_\emptyset Q$, where $Q$ is an arbitrary definite Horn program.
\end{proposition}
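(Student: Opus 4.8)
The plan is to prove both directions of the biconditional directly from the definitions, exploiting the fact that a definite Horn program has exactly one answer set, namely its least model, regardless of which facts are added.

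First I would fix an arbitrary definite Horn program $Q$ over $U$ (for instance the empty program, whose unique answer set over any set of facts $F$ is $F$ itself). The key observation is that for \emph{any} definite Horn program $Q$ and \emph{any} set of facts $F$, the program $Q \cup F$ is still definite Horn, hence has a least model which is its unique answer set; thus $\AS(Q \cup F) \neq \emptyset$ for every $F$, and in particular $\{ I \cap \emptyset \mid I \in \AS(Q\cup F)\}$ equals $\{\emptyset\}$ for every $F$. Note that $I \cap \emptyset = \emptyset$ for any $I$, so the projection onto $B = \emptyset$ collapses all the information about which atoms are in an answer set: the set $\{ I \cap \emptyset \mid I \in \AS(R \cup F)\}$ equals $\{\emptyset\}$ if $\AS(R\cup F) \neq \emptyset$, and equals $\emptyset$ if $\AS(R\cup F) = \emptyset$.

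With this observation the argument is short. For the forward direction, suppose $P$ is super-coherent. Then for every $F \subseteq U$ we have $\AS(P\cup F) \neq \emptyset$, so $\{ I \cap \emptyset \mid I \in \AS(P\cup F)\} = \{\emptyset\} = \{ I \cap \emptyset \mid I \in \AS(Q\cup F)\}$, which is exactly $P \equiv^U_\emptyset Q$. (Here one also uses Proposition~\ref{prop:restrict} to restrict attention to $F \subseteq U$ rather than arbitrary sets of facts, or one simply notes that adding facts outside $\At(P)$ cannot destroy coherence.) For the converse, suppose $P \equiv^U_\emptyset Q$. Fix any $F \subseteq U$. Then $\{ I \cap \emptyset \mid I \in \AS(P\cup F)\} = \{ I \cap \emptyset \mid I \in \AS(Q\cup F)\} = \{\emptyset\}$, the last equality because $Q \cup F$ is definite Horn and hence coherent. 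Since the left-hand side is $\{\emptyset\}$ rather than $\emptyset$, we must have $\AS(P\cup F)\neq\emptyset$. As $F$ was an arbitrary subset of $U$, Proposition~\ref{prop:restrict} gives that $P$ is super-coherent.

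I do not expect any genuine obstacle here; the statement is essentially a reformulation of the definition of super-coherence once one notices that projecting onto the empty alphabet reduces equivalence-with-projection to "both programs are coherent for the same contexts," and that definite Horn programs are coherent for every context. The only point requiring a little care is to state clearly that $\{ I \cap \emptyset \mid I \in \AS(R\cup F)\}$ distinguishes coherence from incoherence (it is $\{\emptyset\}$ versus $\emptyset$), and to invoke Proposition~\ref{prop:restrict} so that quantifying over $F\subseteq U$ suffices.
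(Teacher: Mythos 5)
Your proof is correct and follows essentially the same route as the paper: both arguments reduce $P\equiv^U_\emptyset Q$ to the observation that projection onto the empty alphabet only records whether $\AS(R\cup F)$ is empty or not, and then use the fact that a definite Horn program extended by any facts always has a (unique) answer set. Your explicit appeal to Proposition~\ref{prop:restrict} to pass from $F\subseteq U$ to arbitrary sets of facts is a point the paper leaves implicit, and is a welcome addition.
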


Note that $P\equiv^U_\emptyset Q$ means $\{ I\cap \emptyset \mid I \in \AS(P\cup F)\} = \{ I\cap \emptyset \mid I \in \AS(Q\cup F)\}$ for all sets $F \subseteq U$. Now observe that for any $F \subseteq U$, both of these sets are either empty or contain the empty set, depending on whether the programs (extended by $F$) have answer sets.
Formally, we have 
\[
\{ I\cap \emptyset \mid I \in \AS(P\cup F)\} = \left\{ 
\begin{array}{cl}
\emptyset & \mbox{ iff } \AS(P\cup F) = \emptyset\\
\{\emptyset\} & \mbox{ iff } \AS(P\cup F) \neq \emptyset
\end{array}
\right.
\]
\[
\{ I\cap \emptyset \mid I \in \AS(Q\cup F)\} = \left\{ 
\begin{array}{cl}
\emptyset & \mbox{ iff } \AS(Q\cup F) = \emptyset\\
\{\emptyset\} & \mbox{ iff } \AS(Q\cup F) \neq \emptyset
\end{array}
\right.
\]

If $Q$ is a definite Horn program, then $\AS(Q\cup F) \neq \emptyset$ for all $F \subseteq U$, and therefore the statement of Proposition~\ref{prop:relation} becomes equivalent to checking whether $\AS(P\cup F) \neq \emptyset$ for all $F \subseteq U$, and thus whether $P$ is super-coherent.

\citeText{Oetsch07b} also investigated the complexity of the problem
of deciding uniform equivalence with projection, reporting
$\PiP{3}$-completeness for disjunctive programs and 
$\PiP{2}$-completeness for normal programs.  However, these hardness
results use bound context alphabets $A\subset U$ (where $U$ 
are all atoms from the compared programs).
Our results thus strengthen the observations of \citeText{Oetsch07b}.
Using Proposition~\ref{prop:relation} and the main results in this paper, we 
can state the following result.
\begin{theorem}
The problem of deciding 
$P\equiv^U_\emptyset Q$ 
for given disjunctive (resp.\ normal) programs $P$ and $Q$ 
is $\PiP{3}$-complete (resp.\ $\PiP{2}$-complete) even in case $U$ is the set of all atoms
occurring in $P$ or $Q$.  Hardness already holds if one the programs is the empty program.
\end{theorem}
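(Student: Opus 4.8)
The plan is to derive this result directly from Proposition~\ref{prop:relation} together with Theorems~\ref{thm:disj} and~\ref{thm:norm}. The key observation is that Proposition~\ref{prop:relation} gives an equivalence between super-coherence of an arbitrary program $P$ and the relation $P \equiv^U_\emptyset Q$ for \emph{some} fixed definite Horn program $Q$; in particular we may take $Q$ to be the empty program, which is trivially definite Horn and has $\AS(Q \cup F) = \{F\}$ for every $F$. Thus $P$ is super-coherent if and only if $P \equiv^U_\emptyset \emptyset$, where $U = \At(P)$.

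For hardness, I would argue as follows. Given an instance $P$ of the super-coherence problem (for disjunctive programs in the $\PiP{3}$ case, for normal programs in the $\PiP{2}$ case), set $U = \At(P)$, let $Q$ be the empty program, and note that $\At(P) \cup \At(Q) = \At(P) = U$, so $U$ really is the set of all atoms occurring in $P$ or $Q$. By Proposition~\ref{prop:relation}, $P$ is super-coherent iff $P \equiv^U_\emptyset Q$. Hence the map $P \mapsto (P, \emptyset, U, \emptyset)$ is a polynomial-time reduction from super-coherence to the problem of deciding $P \equiv^U_\emptyset Q$ with $U$ being all atoms of the two programs. Since super-coherence is $\PiP{3}$-hard for disjunctive programs (Theorem~\ref{thm:disj}) and $\PiP{2}$-hard for normal programs (Theorem~\ref{thm:norm}), the corresponding hardness for $\equiv^U_\emptyset$ follows, and moreover the constructed instance has one program (namely $Q$) equal to the empty program, giving the last sentence of the statement.

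For membership, I would invoke the complexity results of \citeText{Oetsch07b}: deciding uniform equivalence with projection is in $\PiP{3}$ for disjunctive programs and in $\PiP{2}$ for normal programs, for arbitrary context alphabets $A$, and these upper bounds of course also hold for the special case $A = U$, $B = \emptyset$. (Alternatively, membership follows directly from the membership halves of Theorems~\ref{thm:disj} and~\ref{thm:norm} via Proposition~\ref{prop:relation}, since the empty program is definite Horn.) Combining the matching upper and lower bounds yields $\PiP{3}$-completeness in the disjunctive case and $\PiP{2}$-completeness in the normal case.

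There is no real obstacle here: all the substantive work has already been done in establishing Theorems~\ref{thm:disj} and~\ref{thm:norm} and in verifying Proposition~\ref{prop:relation}. The only point requiring a moment's care is ensuring that the reduction is genuinely faithful to the phrasing ``$U$ is the set of all atoms occurring in $P$ or $Q$'': taking $Q$ to be the empty program makes $\At(Q) = \emptyset$, so $\At(P) \cup \At(Q) = \At(P)$, and choosing $U = \At(P)$ satisfies the requirement exactly while keeping the reduction trivially polynomial.
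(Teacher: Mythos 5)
Your proposal is correct and follows essentially the same route as the paper, which also derives this theorem directly from Proposition~\ref{prop:relation} (instantiated with a definite Horn $Q$, the empty program being the natural choice) together with the hardness of Theorems~\ref{thm:disj} and~\ref{thm:norm} and the membership bounds of \citeText{Oetsch07b}. The only caveat is that your parenthetical alternative for membership via Proposition~\ref{prop:relation} covers only instances where $Q$ is definite Horn, so for the general two-program problem the appeal to the known upper bounds of \citeText{Oetsch07b} is the argument that actually carries the membership claim.
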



\section{Conclusion}\label{sec:concl}

Many recent advances in ASP rely on the adaptions of technologies from
other areas. One important example is the Magic Set method, which
stems from the area of databases and is used in state-of-the-art ASP
grounders. Recent work showed that the ASP variant of this technique
only applies to the class of programs called super-coherent~%
\cite{alvi-fabe-2011-aicom}. Super-coherent programs are those which
possess at least one answer set, no matter which set of facts is added
to them. We believe that this class of programs is interesting per se
(for instance, since there is a strong relation to some problems in
equivalence checking), and also showed that all of the interesting ASP
tasks can be solved using super-coherent programs only.  For these
reasons we have studied the exact complexity of recognizing the
property of super-coherence for disjunctive and normal programs.  Our
results show that the problems are surprisingly hard, viz.\ complete
for $\PiP{3}$ and resp.\ $\PiP{2}$.  Our results also imply that any
reasoning tasks over \ASP{} can be transformed into tasks over
\ASPSC{}. In particular, this means that all query answering tasks
over \ASP{} can be transformed into query answering over \ASPSC{}, on
which the magic set technique can therefore be applied. However, we
believe that the magic set technique will often not produce efficient
rewritings for programs obtained by the automatic transformation of
Section~\ref{sec:embedding}. A careful analysis of this aspect is
however left for future work, as it is also not central to the topics
of this article.

\bibliographystyle{acmtrans}
\bibliography{supercoherent}

\end{document}